\documentclass
[preprint,byrevtex,pre,nofootinbib,10pt,balancelastpage,titlepage,twocolumn,nobibnotes]{revtex4}%
\usepackage{amsfonts}
\usepackage{amsmath}
\usepackage{amssymb}
\usepackage{graphicx}%
\setcounter{MaxMatrixCols}{30}
\providecommand{\U}[1]{\protect\rule{.1in}{.1in}}
\newtheorem{theorem}{Theorem}

\newtheorem{claim}[theorem]{Claim}
\newtheorem{conclusion}[theorem]{Conclusion}

\newtheorem{definition}[theorem]{Definition}

\newtheorem{remark}[theorem]{Remark}

\newenvironment{proof}[1][Proof]{\noindent\textbf{#1.} }{\ \rule{0.5em}{0.5em}}
\ifx\pdfoutput\relax\let\pdfoutput=\undefined\fi
\newcount\msipdfoutput
\ifx\pdfoutput\undefined\else
\ifcase\pdfoutput\else
\msipdfoutput=1
\ifx\paperwidth\undefined\else
\ifdim\paperheight=0pt\relax\else\pdfpageheight\paperheight\fi
\ifdim\paperwidth=0pt\relax\else\pdfpagewidth\paperwidth\fi
\fi\fi\fi
\begin{document}
\preprint{UATP/1805}
\title{Consequences of the Detailed Balance for the Crooks Fluctuation Theorem}
\author{P.D. Gujrati}
\email{pdg@uakron.edu}
\affiliation{Department of Physics, Department of Polymer Science, The University of Akron,
Akron, OH 44325}

\begin{abstract}
We show that the assumptions of the detailed balance and of the initial
equilibrium macrostate, which are central to the Crooks fluctuation theorem
(CFT), lead to all microstates along a trajectory to have \emph{equilibrium}
probabilities. We also point out that the Crooks's definition of the backward
trajectory does \emph{not} return the system back to its initial microstate.
Once corrected, the detailed balance assumption makes the CFT a theorem only
about reversible processes involving reversible trajectories that satisfy
Kolmogorov's criterion. As there is no dissipation, the CFT cannot cover
irreversible processes, which is contrary to the common belief. This is
consistent with our recent result that the JE is also a result only for
reversible processes.

\end{abstract}
\date{September 6, 2018}
\maketitle

\section{Introduction}

\subsection{Background}

Crooks fluctuation theorem (CFT) \cite{Crooks,Crooks-PRE-2000} has become a
central piece in the development of a microscopic theory of nonequilibrium
thermodynamics (NEQT) expressed in terms of exchange quantities (such as
exchange work $\Delta_{\text{e}}W$ or exchange heat $\Delta_{\text{e}}Q$
having the suffix e) with the medium $\widetilde{\Sigma}$

, which is external to the system $\Sigma$ but interacts with it. The exchange
quantities are controlled by the \emph{medium-intrinsic} (MI) properties of
the medium and are easy to handle as the medium is normally taken to be in
equilibrium with no irreversibility \cite{Notation}; the irreversibilty if
present is always associated with the system $\Sigma$. This has made their
usage very convenient so the CFT has attracted a lot of interest over the past
two decades or so. We denote this particular thermodynamics here by
$\mathring{\mu}$NEQT, with $\mu$ referring to its microscopic nature and the
dot over it referring to the fact that it is based on extending the concept of
macroscopic exchange quantities \cite{Note-SI,Note2,deGroot,Prigogine,Landau}
to the microscopic level. This version should be contrasted with another
version, to be denoted by $\mu$NEQT here, which is based on
\emph{system-intrinsic} (SI) quantities like the generalized work $\Delta W$
and heat $\Delta Q$; see Sec. \ref{Sec-mNEQT} for explanation. Because the
conventional macroscopic NEQT also uses the exchange quantities, the
$\mathring{\mu}$NEQT is supposed to be a direct extension of it
\cite{deGroot,Prigogine,Landau} to the microscopic level but which does not
directly involve any \emph{force imbalance} between the external force and the
internally induced restoring force necessary for irreversibility \cite{Note2};
however, the extension provided by the $\mu$NEQT incorporates explicitly the
irreversibility of the process
\cite{Note2,Gujrati-GeneralizedWork,Gujrati-GeneralizedWork-Expanded} by
directly involving the force imbalance. Within the $\mathring{\mu}$NEQT, the
CFT provides a detailed stochastic description of work fluctuations to the
situation where the final macrostate need not be in equilibrium (EQ). It also
provides another proof of the Jarzynski identity (JE) \cite{Jarzynski}. The JE
applies to the case where external driving is carried out in one shot, while
it is applied several times during short durations in the approach taken by
Crooks (the details are given later in Sec. \ref{Sec-CrooksApproach}).

\begin{definition}
\label{Def-Microquantity}An extensive quantity defined for a microstate or a
trajectory will be called a \emph{microquantity} in this work, such as
microentropy, microwork, microheat, etc. A thermodynamic average quantity will
be called a \emph{macroquantity}.
\end{definition}

\begin{definition}
\label{Def-Trajectory}The time-evolution of a microstate \textsf{a} in a
thermodynamic process is called the trajectory followed by \textsf{a} during
the process. A deterministic or stochastic trajectory followed by a single
microstate \textsf{a} will be denoted by $\gamma_{\mathsf{a}}$ or simply
$\gamma$.
\end{definition}

\begin{definition}
\label{Def-Trajectories} A trajectory involving microstate transitions
\textsf{a}$\mathfrak{\rightarrow}$\textsf{a}$^{\prime}$ between distinct
microstates will be denoted by $\overline{\gamma}$ and called a \emph{mixed} trajectory.
\end{definition}

The difference between a deterministic and a stochastic $\gamma$ is that the
probability of the microstate does not change in the former but changes in the
latter trajectory. However, the microstate does not change. In the mixed
trajectory, both the microstates and the probabilities change so
$\overline{\gamma}$\ is always stochastic and so is the transition
\textsf{a}$\mathfrak{\rightarrow}$\textsf{a}$^{\prime}$.

Crooks consider a process involving mixed trajectories $\overline{\gamma}$.
The main idea behind Crooks' approach is to identify a first-law-like
statement for a trajectory $\overline{\gamma}_{\text{\textsf{ab}}}$ between
two macrostates \textsf{A} and \textsf{B }in a process $\mathcal{P}$, where
\textsf{a }and \textsf{b} refer to some microstates of \textsf{A} and
\textsf{B}, respectively. The macrostates refer to a collection of $r$
microstates in the set $\left\{  \mathsf{a}_{k}\right\}  ,k=1,\cdots,r$ and
their energy and probability sets $\mathbf{E}_{\text{\textsf{A}}}%
\doteq\left\{  E_{\text{\textsf{A}}k}\right\}  ,\mathbf{p}_{\text{\textsf{A}}%
}\doteq\left\{  p_{\text{\textsf{A}}k}\right\}  $ and $\mathbf{E}%
_{\text{\textsf{B}}}\doteq\left\{  E_{\text{\textsf{B}}k}\right\}
,\mathbf{p}_{\text{\textsf{B}}}\doteq\left\{  p_{\text{\textsf{B}}k}\right\}
$, respectively \cite{Note-microstates}. According to Crooks, the
\emph{extension} of the macroscopic first law (see Eq. (\ref{FirstLaw-MI})
below) for the trajectory $\overline{\gamma}_{\text{C,\textsf{ab}}}$\ is
\begin{equation}
\Delta E(\overline{\gamma}_{\text{C,\textsf{ab}}})=\Delta_{\text{e}%
}Q_{\text{C}}(\overline{\gamma}_{\text{C,\textsf{ab}}})+\Delta_{\text{e}%
}R_{\text{C}}(\overline{\gamma}_{\text{C,\textsf{ab}}});
\label{Trajectory-First-Law}%
\end{equation}
here, $\Delta_{\text{e}}Q_{\text{C}}(\overline{\gamma}_{\text{C,\textsf{ab}}%
})$ denotes the microheat exchanged with the medium $\widetilde{\Sigma}$ that
is added to the system and $\Delta_{\text{e}}R_{\text{C}}(\overline{\gamma
}_{\text{C,\textsf{ab}}})$ the exchange microwork done on the system $\Sigma$
by $\widetilde{\Sigma}$ \cite{Note-SI} over the trajectory $\overline{\gamma
}_{\text{C,\textsf{ab}}}$; they are properly defined in Eqs.
(\ref{Crooks-Heat-Total}) and (\ref{Crooks-Work-Total}). The suffix C in
various quantities is to emphasize the choice made by Crooks in his approach
\cite{Crooks,Crooks-PRE-2000} described in Sec. \ref{Sec-CrooksApproach}. From
now on, we will use $\Delta R_{\text{C}}$ for $\Delta_{\text{e}}R_{\text{C}}%
$\ as this work is always an exchange work.

For concreteness, we assume the work process $\mathcal{P}$ to change the
extensive work parameter $\lambda(t)$ like the volume $V(t)$ of the system and
take $\Sigma$\ from \textsf{A} to \textsf{B }over a time period\textsf{
}$(0,\tau)$. Let $t_{m}=m\delta t,m=0,1,2,\cdots,n$ denote a sequence of times
over the duration $(0,\tau)$ so that $t_{0}=0$ and $t_{n}=\tau$. We denote
$\lambda(t)$ by $\lambda_{m}$ and the microstate by \textsf{a}$_{k_{m}}$ at
$t=t_{m}$. It changes from $\lambda_{l}$\ to $\lambda_{l+1}$ during the
interval $\delta_{l}\doteq(t_{l},t_{l+1}),l=0,1,2,\cdots,n-1$. During the same
interval, \textsf{a}$_{k_{l}}\rightarrow$\textsf{a}$_{k_{l+1}}$ due to
microheat exchange so that the microstate becomes \textsf{a}$_{k_{l+1}}$ at
$t=t_{l+1}$. The final microstate at the end of $\mathcal{P}$ is
\textsf{a}$_{k_{n}}$. In the following, we will find it convenient to denote
\textsf{a}$_{k}$\ simply by the index $k\in\left(  1,2,\cdots,r\right)  $ so
that the subscript $m$ on it ($k_{m}$) will denote the time $t_{m}$ it appears
in a realization of $\mathcal{P}$; the work parameter associated with $k_{m}$
is $\lambda_{m}$. In the transition $k_{l}\rightarrow k_{l+1}$ during
$\delta_{l}$, $k_{l}$ is the \emph{departing} microstate and $k_{l+1}$ is the
\emph{arriving} microstate.

Each realization of the process involves taking $\Sigma$ from some initial
$k_{0}=$\textsf{a}\ through a sequence of intermediate microstates to the
final microstate $k_{n}=$\textsf{b}. The trajectory $\overline{\gamma
}_{\text{C,\textsf{ab}}}$ refers to a chronological sequence of microstates
$(k_{0},k_{1},k_{2},\cdots,k_{n-1},k_{n})$ that the system $\Sigma$\ passes
through at times $t_{m}$, with $k_{m}\in\left(  1,2,\cdots,r\right)  $. We
will refer to this as the forward trajectory $\overline{\gamma}%
_{\text{C,\textsf{ab}}}^{\text{(F)}}$ or simply as $\overline{\gamma
}_{\text{C}}^{\text{(F)}}$, when the context is clear, and the corresponding
process $\mathcal{P}^{\text{(F)}}$. In the backward trajectory $\overline
{\gamma}_{\text{C,\textsf{ba}}}^{\text{(B)}}$ or simply $\overline{\gamma
}_{\text{C}}^{\text{(B)}}$ for the backward process $\mathcal{P}^{\text{(B)}}$
from \textsf{B }to \textsf{A}, $\Sigma$ passes the microstates through the
reverse sequence $(k_{n}=\mathsf{b},k_{n-1},k_{n-2},\cdots,k_{1},k_{0}%
=$\textsf{a}$)$ so that the initial microstate $k_{0}$ of $\overline{\gamma
}_{\text{C,\textsf{ab}}}^{\text{(F)}}$ is the terminal microstate of
$\overline{\gamma}_{\text{C,\textsf{ba}}}^{\text{(B)}}$. There are $r^{n}$
possible forward and backward trajectories for the process $\mathcal{P}%
^{\text{(F)}}$\ and $\mathcal{P}^{\text{(B)}}$, respectively, with
probabilities $p(\overline{\gamma}_{\text{C,\textsf{ab}}}^{\text{(F)}})$ and
$p(\overline{\gamma}_{\text{C,\textsf{ba}}}^{\text{(B)}})$, respectively. The
collection of all forward or backward trajectories will be denoted by the fold
face $\boldsymbol{\overline{\gamma}}$$_{\text{C}}^{\text{(F)}}$ or
$\boldsymbol{\overline{\gamma}}$$_{\text{C}}^{\text{(B)}}$, respectively. We
will use $\mathcal{P}$\ to denote both processes $\mathcal{P}^{\text{(F)}}%
$\ and $\mathcal{P}^{\text{(B)}}$. Crooks derives the following relation for
each of the forward trajectories and its associated backward trajectory:%
\begin{equation}
e^{\omega_{\text{C}}(\overline{\gamma}_{\text{C,\textsf{ab}}}^{\text{(F)}}%
)}\doteq\frac{p(\overline{\gamma}_{\text{C,\textsf{ab}}}^{\text{(F)}}%
)}{p(\overline{\gamma}_{\text{C,\textsf{ba}}}^{\text{(B)}})}=e^{\Delta
S(\overline{\gamma}_{\text{C,\textsf{ab}}}^{\text{(F)}})-\beta_{0}\Delta
Q_{\text{C}}(\overline{\gamma}_{\text{C,\textsf{ab}}}^{\text{(F)}}%
)},\label{Crooks-FT0}%
\end{equation}
where
\begin{equation}
\Delta S(\overline{\gamma}_{\text{C,\textsf{ab}}}^{\text{(F)}})=\ln
p(\text{\textsf{a}})-\ln p(\text{\textsf{b}})\label{Trajectory entropy diff}%
\end{equation}
is the change in the \emph{microscopic} entropy (\emph{microentropy}
$S($\textsf{a}$)\doteq-\ln p$ for a given microstate \textsf{a}, which occurs
with a probability $p$) of the forward trajectory $\overline{\gamma
}_{\text{C,\textsf{ab}}}^{\text{(F)}}\doteq(k_{0},k_{1},k_{2},\cdots
,k_{n-1},k_{n})$ and $\beta_{0}$ is the inverse temperature of the heat bath.
For a given \textsf{a }and \textsf{b}, $\Delta S(\overline{\gamma
}_{\text{C,\textsf{ab}}}^{\text{(F)}})$ does not depend on various
trajectories in $\boldsymbol{\overline{\gamma}}$$_{\text{C,\textsf{ab}}%
}^{\text{(F)}}$. The exponent
\[
\omega_{\text{C}}(\overline{\gamma}_{\text{C,\textsf{ab}}}^{\text{(F)}}%
)\doteq\Delta S(\overline{\gamma}_{\text{C,\textsf{ab}}}^{\text{(F)}}%
)-\beta_{0}\Delta Q_{\text{C}}(\overline{\gamma}_{\text{C,\textsf{ab}}%
}^{\text{(F)}})
\]
on the right side of Eq. (\ref{Crooks-FT0}) is supposed to denote the
microentropy change $\Delta_{0}S(\overline{\gamma}_{\text{C,\textsf{ab}}%
}^{\text{(F)}})$ or simply $\Delta_{0}S(\overline{\gamma}_{\text{C}%
}^{\text{(F)}})$\ of the \emph{isolated} system $\Sigma_{0}\doteq\Sigma
\cup\widetilde{\Sigma}$ along $\overline{\gamma}_{\text{C,\textsf{ab}}%
}^{\text{(F)}}$, whose thermodynamic average, see Sec.
\ref{Sec-ThermodynamicAverage}, that is, the macroentropy
\begin{equation}
\Delta_{0}S^{\text{(F)}}\doteq\left\langle \Delta_{0}S\right\rangle
_{\boldsymbol{\overline{\gamma}}_{\text{C}}^{\text{(F)}}}%
\label{TotalEntropy-Isolated}%
\end{equation}
over the forward trajectories in the set $\boldsymbol{\overline{\gamma}}%
$$_{\text{C}}^{\text{(F)}}$ gives the \emph{irreversible entropy}
$\Delta_{\text{i}}S^{\text{(F)}}$ generated within the system $\Sigma$, which
satisfies the second law
\begin{equation}
\Delta_{0}S^{\text{(F)}}=\Delta_{\text{i}}S^{\text{(F)}}\geq
0.\label{Irreversible Entropy}%
\end{equation}
For $\Delta_{\text{i}}S^{\text{(F)}}>0$, the CFT represents a NEQ result. For
$\Delta_{\text{i}}S^{\text{(F)}}=0$, the CFT represents an EQ result. Crooks
also emphasizes that the CFT is valid even if \textsf{B }is \emph{not} an
EQ-macrostate. If the final macrostate \textsf{B }denotes an EQ-macrostate,
then one can easily derive a version of the JE.

\begin{remark}
\label{Remark-AvNotation}From the notation $\Phi\doteq\left\langle
\Phi\right\rangle _{\boldsymbol{\overline{\gamma}}_{\text{C}}^{\text{(F)}}}$
indicating the average over $\forall\overline{\gamma}_{\text{C}}^{\text{(F)}%
}\in$$\boldsymbol{\overline{\gamma}}$$_{\text{C}}^{\text{(F)}}$, the function
$\Phi$ inside the angular brackets is a microfunction over the trajectory
$\overline{\gamma}_{\text{C}}^{\text{(F)}}$ and should not be confused with
the thermodynamic average, the macrofunction, on the left side. Thus,
$\Delta_{0}S$ within the angular brackets in $\left\langle \Delta
_{0}S\right\rangle _{\boldsymbol{\overline{\gamma}}_{\text{C}}^{\text{(F)}}}%
$\ stands for the trajectory microentropy change $\Delta_{0}S(\overline
{\gamma}_{\text{C}}^{\text{(F)}})$, and should never be confused with the
macroentropy change $\Delta_{0}S^{\text{(F)}}$, the entropy change of
$\Sigma_{0}$.
\end{remark}

\begin{remark}
\label{Remark-SecondLaw}While $\Delta_{\text{i}}S^{\text{(F)}}\geq0$ is
nonnegative because of the second law, there is no such restriction for
individual trajectory microentropy $\Delta_{0}S(\overline{\gamma}_{\text{C}%
}^{\text{(F)}})$; the latter can be of any sign.
\end{remark}

\subsection{Motivation}

The derivation of Eq. (\ref{Crooks-FT0}) is based on treating the sequence
$(k_{0},k_{1},k_{2},\cdots,k_{n-1},k_{n})$ as a Markov chain satisfying the
principle of \emph{detailed balance}. This is surprising as the principle is
known to only hold for reversible changes \cite{Klein,Klein0}, \textit{i.e.},
the system will show no irreversibility. Therefore, it is important to
understand how the derivation by Crooks overcomes this limitation. It is this
desire that has motivated this work. Moreover, there seems to be an asymmetry
between $\overline{\gamma}_{\text{C}}^{\text{(F)}}$ and $\overline{\gamma
}_{\text{C}}^{\text{(B)}}$ \cite{Crooks}. For $\overline{\gamma}_{\text{C}%
}^{\text{(F)}}$, the\ initial transition $k_{0}\rightarrow k_{1}$ occurs at
\emph{fixed} $\lambda=\lambda_{1}$ associated with the arriving microstate
$k_{1}$ at $t_{1}$, and the final transition $k_{n-1}\rightarrow k_{n}$ occurs
at fixed $\lambda=\lambda_{n}$ associated with the arriving microstate $k_{n}$
at $t_{n}$. Thus, the work parameter for $k_{l}\rightarrow k_{l+1}$ is fixed
at $\lambda=\lambda_{l+1}$ the associated with the arriving microstate at
$t=t_{l+1}$ in $\overline{\gamma}_{\text{C}}^{\text{(F)}}$. However, for
$\overline{\gamma}_{\text{C}}^{\text{(B)}}$, the initial transition
$k_{n}\rightarrow k_{n-1}$ occurs at fixed $\lambda=\lambda_{n}$ associated
with the departing microstate $k_{n}$. Similarly, the final transition
$k_{1}\rightarrow k_{0}$ occurs at fixed $\lambda=\lambda_{1}$ associated with
the departing microstate $k_{1}$. Thus, the work parameter for $k_{l+1}%
\rightarrow k_{l}$ is at fix $\lambda=\lambda_{l+1}$ associated with the
departing microstate $k_{l+1}$ in $\overline{\gamma}_{\text{C}}^{\text{(B)}}$.
This asymmetry appears innocuous but is not as $\overline{\gamma}_{\text{C}%
}^{\text{(B)}}$ does not bring the system back to the equilibrium microstate
$k_{0}$ as will be discussed later; see Sec. \ref{Sec-BackwardTrajectory}.
This requires carefully identifying the correct backward trajectory.

\subsection{New Results}

We always assume $k_{0}=$\textsf{a} to have the equilibrium probability
associated with the equilibrium macrostate \textsf{A}. We show that the
assumption of the principle of detailed balance, when carefully analyzed,
requires the new microstates $\left\{  k_{m}\right\}  _{m=1,\cdots,n}$ to have
equilibrium probabilities \cite{Note-microstates} so that the final macrostate
\textsf{B}\ is also an EQ-macrostate. The fact that the sequence $\left\{
k_{m}\right\}  _{m=0,1,\cdots,n}$ belonging to $\mathcal{P}^{\text{(F)}}$\ is
a sequence of EQ-microstates, see Definition \ref{Def-EQ-NEQ-microstates},
having equilibrium probabilities does not rule out by itself that
$\mathcal{P}^{\text{(F)}}$ is not irreversible as one can imagine an
irreversible process between two equilibrium macrostates. Therefore, we need
some additional argument to establish that $\mathcal{P}^{\text{(F)}}$\ is
reversible. For this we need to consider the backward process. We show that
when we carefully identify the backward process, we find that
\begin{equation}
\omega(\overline{\gamma}_{\text{\textsf{ab}}}^{\text{(F)}})\equiv
\Delta_{\text{i}}S(\overline{\gamma}_{\text{\textsf{ab}}}^{\text{(F)}%
})=0,\forall\overline{\gamma}_{\text{\textsf{ab}}}^{\text{(F)}}
\label{Correct-EntropyGeneration-Trajectory}%
\end{equation}
so that the thermodynamic average over the set $\boldsymbol{\overline{\gamma}%
}$$_{\text{\textsf{ab}}}^{\text{(F)}}$ also vanishes identically
\begin{equation}
\Delta_{\text{i}}S^{\text{(F)}}=0; \label{Frozen-Microentropy}%
\end{equation}
compare with Eq. (\ref{Irreversible Entropy}) and the discussion following it.
Note that the forward trajectory $\overline{\gamma}_{\text{\textsf{ab}}%
}^{\text{(F)}}$ introduced above and the associated backward trajectory
$\overline{\gamma}_{\text{\textsf{ab}}}^{\text{(B)}}$ are not identical to the
Crooks forward and backward processes $\overline{\gamma}_{\text{C,\textsf{ab}%
}}^{\text{(F)}}$ and $\overline{\gamma}_{\text{C,\textsf{ab}}}^{\text{(B)}}$,
respectively, as we will explain later; see Eqs.
(\ref{ForwardTrajectorySequence}) and (\ref{ReverseTrajectorySequence}). This
then completes the demonstration that the CFT, when properly defined, only
covers reversible processes, contrary to what is conventionally accepted in
the field. This is consistent with our previous conclusion
\cite{Gujrati-GeneralizedWork,Gujrati-GeneralizedWork-Expanded} that the JE is
also restricted to reversible processes only.

\subsection{Layout}

The layout of the paper is as follows. In the next section, we briefly review
the two versions of the microscopic NEQT. In particular, we show that the
generalized SI-work \cite{Note-SI} is carried out \emph{isentropically}, while
the generalized SI-heat is always given by $TdS$ no matter how irreversible
the transformation is provided we can identify the temperature $T$. This
ensures that the generalized work ($dS=0$) and generalized heat ($dS\neq0$)
are independent contributions in the first law that could never be confused.
At the microstate or trajectory level, this means that the generalized
microwork is carried out at fixed probabilities ($dp=0$), while the
generalized microheat requires changes in the probabilities ($dp\neq0$). This
distinct separation proves useful for calculating various trajectory
quantities as no confusion can arise between the generalized microwork and
microheat. We summarize Crooks's approach in Sec. \ref{Sec-CrooksApproach}. We
partition the medium $\widetilde{\Sigma}$\ into two separate and mutually
noninteracting parts, the external work source $\widetilde{\Sigma}_{\text{w}}$
and an external heat source $\widetilde{\Sigma}_{\text{h}}$ \cite{Notation}%
.\ Crooks simplifies his approach and allows the system $\Sigma$\ to undergo
interactions with each one at different times; we call these $\widetilde
{\Sigma}_{\text{w}}$-interactions to perform external work and $\widetilde
{\Sigma}_{\text{h}}$-interactions to exchange microheat. The principle of
microscopic detailed balance is treated in Sec. \ref{Sec-DetailedBalance}.
This is an important section where attention is drawn to the fact that the
acceptance of the principle results in the transition matrix elements
$\left\{  T_{ij}\right\}  $ being determined by the equilibrium probabilities
of the arriving state $j$, see Eqs. (\ref{TransitionMatrix-Elements-DB}) and
(\ref{TransitionMatrix-Unique}), and has no memory of the departing state $i$.
The same conclusion follows if we treat the transition matrix to be balanced.
It follows from this conclusion that all microstates after $\widetilde{\Sigma
}_{\text{h}}$-interactions become EQ-microstates (see Definition
\ref{Def-EQ-NEQ-microstates}). We then show in Sec. \ref{Sec-Work-Heat-Order},
see Conclusion \ref{Conclusion-InteractionOrder}, that $\widetilde{\Sigma
}_{\text{w}}$-interactions must precede $\widetilde{\Sigma}_{\text{h}}%
$-interactions during each interval $\delta_{l}$ of the process $\mathcal{P}%
^{\text{(F)}}$ if we require a terminal EQ-microstate at $t=t_{l+1}$; the
order \emph{cannot} be reversed. The following section, Sec.
\ref{Sec-Consequences}, is devoted to the consequences of the detailed
balance. Sec. \ref{Sec-BackwardTrajectory} is also very important, where we
show that the Crooks's backward trajectory does not satisfy the basic
requirement that the reverse process brings the system back to its initial
macrostate. We introduce a novel trick to identify the reverse process in a
transparent manner, which we use to introduce the corrected form of the
backward trajectory that satisfies this basic requirement. If accepted, we
find that the Markov chain becomes \emph{reversible}. We then demonstrate that
$\omega(\overline{\gamma}_{\text{C,\textsf{ab}}}^{\text{(F)}})=0$, see Eq.
(\ref{Correct-EntropyGeneration-Trajectory}), which is simply an extension of
Kolmogorov's Criterion for conventional reversible Markov chain to the present
work where two different interactions are involved. The extended Kolmogorov
criterion shows that there is no irreversible entropy generation
($\Delta_{\text{i}}S^{\text{(F)}}=0$) and finally proves that the corrected
CFT only applies to reversible processes. It cannot apply to irreversible
processes because of the acceptance of the principle of detailed balance. This
conclusion is consistent with our recent result
\cite{Gujrati-GeneralizedWork,Gujrati-GeneralizedWork-Expanded} that the JE is
also restricted to reversible processes. Thus, they both fail to account for
any irreversibility in $\mathcal{P}^{\text{(F)}}$, contrary to the popular belief.

\section{Microscopic NEQ Thermodynamics\label{Sec-mNEQT}}

In this section, we briefly review the two versions: $\mu$NEQT and
$\mathring{\mu}$NEQT and show how they will be used in understanding the
consequences of the CFT. As noted above, $\mathring{\mu}$NEQT exploits
exchange quantities that are determined by the properties of the medium
$\widetilde{\Sigma}$ (see below for more details). Therefore, the
$\mathring{\mu}$NEQT is governed by \emph{medium-intrinsic} (MI) quantities
\cite{Note-SI,Note2,Gujrati-GeneralizedWork,Gujrati-GeneralizedWork-Expanded}
so care must be taken to bring in indirectly the force imbalance necessary for
irreversibility \cite{Note2} in any consideration. In contrast, the $\mu$NEQT
based on SI-quantities already includes irreversible contributions to the
system due to force imbalance; see Sec. \ref{Sec-CrooksApproach} and
\cite{Note2}. Therefore, the two approaches are very different when
irreversibility is present.

\subsection{Thermodynamic Averages\label{Sec-ThermodynamicAverage}}

In general, an equilibrium or nonequilibrium \emph{ensemble average} (EA) is
defined instantaneously, and\ requires identifying (a) the elements
(microstates $\left\{  \mathsf{a}_{k}\right\}  $) of the ensemble and (b)
their instantaneous probabilities $\left\{  p_{k}\right\}  $. The average is
\emph{uniquely} defined over $\left\{  \mathsf{a}_{k}\right\}  $ using
$\left\{  p_{k}\right\}  $\ at each instant, which we identify as the
\emph{instantaneous ensemble average} (IEA). Let $O_{k}$ be some extensive
microquantity pertaining to \textsf{a}$_{k}$ and $dO_{k}$ the change in it
during some infinitesimal process $d\mathcal{P}$. The instantaneous
thermodynamic averages $\left\langle O\right\rangle $ and $\left\langle
dO\right\rangle $\ are defined \cite{Prigogine,Landau} as
\begin{subequations}
\begin{align}
\left\langle O(t)\right\rangle  &  \doteq%
{\textstyle\sum\nolimits_{k}}
O_{k}(t)p_{k}(t),\label{Thermodynamic Average-General}\\
\left\langle dO(t)\right\rangle  &  \doteq%
{\textstyle\sum\nolimits_{k}}
dO_{k}(t)p_{k}(t), \label{Thermodynamic Average-Differential}%
\end{align}
and define the corresponding macroquantities. We will usually not show the
time $t$ unless clarity is needed. In thermodynamics, it is common to simply
use $O$ and $dO$\ for the average also. Therefore, we will be careful to avoid
this simplification if it may cause confusion. The average energy
$E\equiv\left\langle E\right\rangle $ and entropy $S=\left\langle
S\right\rangle $ are such instantaneous average SI-quantities, where $E_{k}$
and $S_{k}=-\ln p_{k}\doteq-\eta_{k}$ denote the microenergy and microentropy
of the microstate \textsf{a}$_{k}$; we have also introduced \emph{Gibbs
probability index} $\eta_{k}=\ln p_{k}$. The infinitesimal thermodynamic
MI-work $dR\equiv\left\langle dR\right\rangle $ done on the system and the
SI-work done by the system $dW\equiv\left\langle dW\right\rangle $ also
represent such an average instantaneous quantity. They involve the microwork
$dR_{k}$ and $dW_{k}$ associated with \textsf{a}$_{k}$ during $d\mathcal{P}$.

We can extend the average notion in Eq.
(\ref{Thermodynamic Average-Differential}) to the trajectory average
$\left\langle \Delta O\right\rangle _{\boldsymbol{\overline{\gamma}%
}_{\text{\textsf{ab}}}}$:%
\begin{equation}
\Delta O_{\boldsymbol{\overline{\gamma}}_{\text{\textsf{ab}}}}\doteq%
{\textstyle\sum\nolimits_{\overline{\gamma}_{\text{\textsf{ab}}}}}
\Delta O(\overline{\gamma}_{\text{\textsf{ab}}})p(\overline{\gamma
}_{\text{\textsf{ab}}}), \label{Thermodynamic Average-Trajectory}%
\end{equation}
in terms of the probabilities $p(\overline{\gamma}_{\text{\textsf{ab}}})$ of
$\overline{\gamma}_{\text{\textsf{ab}}}\in$$\boldsymbol{\overline{\gamma}}%
$$_{\text{\textsf{ab}}}$; here $\Delta O(\overline{\gamma}_{\text{\textsf{ab}%
}})$ denotes the accumulated microvalue of $O$ along $\overline{\gamma
}_{\text{\textsf{ab}}}$
\end{subequations}
\begin{equation}
\Delta O(\overline{\gamma}_{\text{\textsf{ab}}})\doteq\int\nolimits_{\overline
{\gamma}_{\text{\textsf{ab}}}}dO_{k}(t), \label{Accumulation-Trajectory}%
\end{equation}
where $dO_{k}(t)$ refers to the instantaneous microstate \textsf{a}$_{k}$
along $\overline{\gamma}_{\text{\textsf{ab}}}$. All quantities related to the
trajectory are microquantities as opposed to the macroquantities $\left\langle
O\right\rangle $ and $\left\langle dO\right\rangle $\ above. These are the
quantities that are relevant in $\mu$NEQT and $\mathring{\mu}$NEQT.

\subsection{The First Law}

The first law during an infinitesimal process $d\mathcal{P}$ is expressed as a
sum of two SI-contributions \cite{Gujrati-Stat}%
\begin{equation}
d\left\langle E\right\rangle =%
{\textstyle\sum\nolimits_{k}}
E_{k}dp_{k}+%
{\textstyle\sum\nolimits_{k}}
p_{k}dE_{k}. \label{FirstLaw}%
\end{equation}
The first sum represents the \emph{generalized heat}
\begin{subequations}
\begin{equation}
dQ=%
{\textstyle\sum\nolimits_{k}}
E_{k}dp_{k}, \label{Gen-Heat}%
\end{equation}
while the second sum represents $-dW$, the \emph{generalized work}
\cite{Gujrati-GeneralizedWork,Gujrati-Stat,Gujrati-GeneralizedWork-Expanded}%
\begin{equation}
dW=-%
{\textstyle\sum\nolimits_{k}}
p_{k}dE_{k}. \label{Gen-Work}%
\end{equation}
We have shown elsewhere \cite{Gujrati-I,Gujrati-II,Gujrati-Entropy2} that the
generalized heat can be used to turn the conventional Clausius inequality into
a generalized Clausius equality
\end{subequations}
\begin{equation}
dQ=TdS; \label{ClausiusEquality}%
\end{equation}
here $T$ denotes the temperature of the system, which may or may not equal the
temperature $T_{0}$ of the medium. The Clausius equality is consistent with
$dQ$ being an SI-quantity. Thus, the final form of the first law in terms of
the SI-quantities becomes%
\begin{equation}
d\left\langle E\right\rangle =dQ-dW. \label{FirstLaw-SI}%
\end{equation}
It is this formulation of the first law that forms the cornerstone of the
$\mu$NEQT \cite{Gujrati-GeneralizedWork,Gujrati-GeneralizedWork-Expanded} and
will guide us in this work for the simple reason that it provides a
straightforward description of the microstates and trajectories as noted above.

The first law in Eq. (\ref{FirstLaw}) is in terms of the SI-quantities $E_{k}$
and $p_{k}$. However, its important lies in its ability to clearly distinguish
the concept of generalized heat and work. The origin of the generalized heat
$dQ$ is the change in the microstate probabilities, while their microenergies
remain fixed, and the origin of the generalized work $dW$ is the change in the
microenergies, while their probabilities remain fixed. For fixed
probabilities, the entropy $S=-%
{\textstyle\sum\nolimits_{k}}
p_{k}\ln p_{k}$ remains constant. Therefore, the generalized work is the
\emph{isentropic }change in the macroenergy change $d\left\langle
E\right\rangle $, and the change in $d\left\langle E\right\rangle $ due to
$dS$ alone is the generalized heat. This clearly shows the two distinct
sources for $dQ$ and $dW$ and provides a clear distinction between the two
quantities as said above.

The important point to remember is that these SI-microquantities include the
contributions that arise from the mismatch between the system's and medium's
quantity expressed by the force imbalance \cite{Note2}. This is easily seen by
observing that these generalized quantities differ from the exchanged heat and
work $d_{\text{e}}Q$ and $d_{\text{e}}W$, respectively:%
\begin{equation}
dQ=d_{\text{e}}Q+d_{\text{i}}Q,dW=d_{\text{e}}W+d_{\text{i}}%
W,\label{SI-MI-Relations}%
\end{equation}
with $d_{\text{e}}W=-dR$. The differences $d_{\text{i}}Q$ and $d_{\text{i}}W$
are generated within the system and denote irreversible quantities; see Remark
\ref{Reamrk-Internal-Irreversible}. They satisfy an important identity of
their magnitude%
\begin{equation}
d_{\text{i}}Q=d_{\text{i}}W\geq0\label{diQ-diW-equality}%
\end{equation}
but not the source:\ While $d_{\text{i}}Q$ is generated by the internal
changes in the probabilities at fixed $\mathbf{E}=\left\{  E_{k}\right\}  $,
$d_{\text{i}}W$ is generated by the internal changes in the energies at fixed
$\mathbf{p}=\left\{  p_{k}\right\}  $.

The equality in Eq. (\ref{diQ-diW-equality}) ensures that the first law can
also be expressed in terms of the MI-quantities $d_{\text{e}}Q$ and
$d_{\text{e}}W=-dR$:%
\begin{equation}
d\left\langle E\right\rangle =d_{\text{e}}Q-d_{\text{e}}W=d_{\text{e}}Q+dR.
\label{FirstLaw-MI}%
\end{equation}
It is this formulation that is employed in the $\mathring{\mu}$NEQT and in the
CFT as we will discuss below.

\subsection{Microquantities}

Using Eq. (\ref{Thermodynamic Average-Differential}), we can identify
microheat and microwork
\cite{Gujrati-Entropy2,Gujrati-GeneralizedWork-Expanded} from Eq.
(\ref{FirstLaw}). They are
\begin{equation}
dQ_{k}\doteq E_{k}d\eta_{k},dW_{k}\doteq-dE_{k}, \label{Micro-heat-work}%
\end{equation}
that are associated with \textsf{a}$_{k}$. We also note that they have
distinct sources (fixed $E_{k}$ versus fixed $p_{k}$). Let us look at $dW_{k}%
$, which is part of the summand in the second sum in Eq. (\ref{FirstLaw}).
This summand requires no change in the probability. Therefore, $dE_{k}%
=-dW_{k}$ is the \emph{deterministic} change in the energy $E_{k}$ of
\textsf{a}$_{k}$. This explains the deterministic contribution along $\gamma$
in Definition \ref{Def-Trajectories}. The contribution $dQ_{k}$ is part of the
summand in the first sum in Eq. (\ref{FirstLaw}). This summand requires
changes in the probabilities but not in $E_{k}$. This, therefore, represents
the \emph{stochastic} contribution.

We also observe that both $dQ_{k}$ and $dW_{k}$ refer to the microchanges
associated with a single microstate \textsf{a}$_{k}$, except that $dQ_{k}$ is
a stochastic variable as it involves probability change $d\eta_{k}$, while
$dW_{k}$ is a deterministic variable as it involves no probability change.
Therefore, to determine their accumulation using Eq.
(\ref{Accumulation-Trajectory}), we must replace $\overline{\gamma
}_{\text{\textsf{ab}}}$ by $\gamma_{\text{\textsf{a}}}$. This is a tremendous
simplification due to the concept of microheat $dQ_{k}$ in the $\mu$NEQT
compared to the Crooks approach which uses the mixed trajectory $\overline
{\gamma}_{\text{\textsf{ab}}}$, where one is forced to introduce transition
probabilities; see Secs. \ref{Sec-CrooksApproach} and
\ref{Sec-DetailedBalance}.

The thermodynamic averages of microheat and microwork following Eq.
(\ref{Thermodynamic Average-Differential}) are
\begin{equation}
dQ=\left\langle dQ\right\rangle ,dW=\left\langle dW\right\rangle .
\label{Av-Heat-Work}%
\end{equation}
It is these microquantities that lend them useful to study quantities relevant
for a trajectory $\gamma$.

From Eq. (\ref{SI-MI-Relations}), we find that the irreversible work is given
by
\begin{subequations}
\begin{equation}
d_{\text{i}}W=dW+dR\geq0, \label{IrreversibleWork0}%
\end{equation}
where we have used the identification $d_{\text{e}}W=-dR$. We are now ready to
understand the physical significance of $d_{\text{i}}W$. Its microscopic
analog $d_{\text{i}}W_{k}$ for a given microstate \textsf{a}$_{k}$ is the
microwork done by the force imbalance such as $P_{k}-P_{0}$ \cite{Note2} as
shown elsewhere
\cite{Gujrati-GeneralizedWork,Gujrati-GeneralizedWork-Expanded} so that
\begin{equation}
d_{\text{i}}W\equiv%
{\textstyle\sum\nolimits_{k}}
p_{k}d_{\text{i}}W_{k}=\left\langle d_{\text{i}}W\right\rangle .
\label{IrreversibleWork}%
\end{equation}
The microwork $d_{\text{i}}W_{k}$ denotes an internal microwork and not an
irreversible microwork; see Remark \ref{Reamrk-Internal-Irreversible}.

The inequality in Eq. (\ref{diQ-diW-equality}) or in Eq.
(\ref{IrreversibleWork0}) is related to the second law \cite{Gujrati-Entropy2}%
. It is easy to show that
\end{subequations}
\begin{subequations}
\begin{equation}
d_{\text{i}}Q=d_{\text{i}}W=(T-T_{0})d_{\text{e}}S+Td_{\text{i}}S\geq0,
\label{diQ-diS-General}%
\end{equation}
in which each term of the second equation must be nonnegative to satisfy the
second law \cite{Gujrati-GeneralizedWork}. For an isothermal change ($T=T_{0}%
$), we have%
\begin{equation}
d_{\text{i}}Q=d_{\text{i}}W=T_{0}d_{\text{i}}S\geq0.
\label{diQ-diS-Isothermal}%
\end{equation}
The irreversibility ($d_{\text{i}}S>0$) in this case is due to the
irreversibility caused by the performance of work. The term $(T-T_{0}%
)d_{\text{e}}S$ is the irreversibility caused by heat transfer at different
temperature ($T\neq T_{0}$) in addition to the work irreversibility. For a
reversible process, $d_{\text{i}}Q=d_{\text{i}}W=0$. In that case,
$T=T_{0},d_{\text{i}}S=0,dQ=d_{\text{e}}Q=T_{0}dS$ and $dW=d_{\text{e}}W=-dR$.

The microscopic version of Eq. (\ref{diQ-diS-Isothermal}) gives rise to
\end{subequations}
\begin{equation}
d_{\text{i}}W_{k}=d_{\text{i}}Q_{k}. \label{micro-internal-heat-workEquality}%
\end{equation}
It follows from Eq. (\ref{diQ-diS-Isothermal}) that the corresponding
microrelation is $d_{\text{i}}Q_{k}=T_{0}d_{\text{i}}S_{k}$, $d_{\text{i}%
}S\doteq\langle d_{\text{i}}S\rangle$,; compare with Eq.
(\ref{IrreversibleWork}). This microrelation can be used to relate the
microwork $d_{\text{i}}W_{k}$ with the internal microentropy change
$d_{\text{i}}S_{k}$ associated with \textsf{a}$_{k}$ :%
\begin{equation}
d_{\text{i}}W_{k}=T_{0}d_{\text{i}}S_{k},
\label{IsothermaInternallMicrowork-InternalMicroentropy}%
\end{equation}
except that there is no restriction on the sign of $d_{\text{i}}W_{k}$ or
$d_{\text{i}}S_{k}$ at the microstate level.

\begin{remark}
\label{Remark-NoForceImbalance-ReversibleProcess}If a theory does not allow
for any force imbalance for \textsf{a}$_{k}$, $d_{\text{i}}W_{k}$ will be
identically zero for $\forall k$. If the process is isothermal, we see from
Eq. (\ref{IsothermaInternallMicrowork-InternalMicroentropy}) that
$d_{\text{i}}S_{k}\equiv0$. Consequently,
\begin{equation}
d_{\text{i}}S\equiv\left\langle d_{\text{i}}S\right\rangle =0,
\label{NoForceImbalance}%
\end{equation}
which implies that the process is a reversible process. However, as discussed
in Remark \ref{Remark-DR-DH-relation}, there are no fluctuations in
$d_{\text{i}}S_{k}$.
\end{remark}

We will see later how Eqs.
(\ref{IsothermaInternallMicrowork-InternalMicroentropy}) and
(\ref{NoForceImbalance}) are fulfilled in the CFT.

\begin{remark}
\label{Remark-Hindsight}With hindsight, we will only be interested in treating
processes for the CFT that satisfy Eq. (\ref{MI-Work-HamiltonianRelation}).
This means that Crooks does not allow any force imbalance as we discuss below.
\end{remark}

\subsection{Work-Energy Theorem\label{Sec-Work-Energy-Theorem}}

From now on, we will usually refer to the generalized (macro)work and
(macro)heat simply as work and heat for simplicity as the notation will be
explicit and no confusion can arise. The work $dW$ done by the system is an
SI-quantity, which is distinct from the MI-work $dR$ done on the system by the
medium during an infinitesimal process $d\mathcal{P}$. They are related by
\cite{Gujrati-I,Gujrati-II,Gujrati-Entropy2}%
\begin{equation}
dW=-dR+d_{\text{i}}W,d_{\text{i}}W\geq0, \label{dWork-relation}%
\end{equation}
along some infinitesimal process $d\mathcal{P}$; see Eq.
(\ref{IrreversibleWork0}). By introducing the accumulation $\Delta\Psi$ of
some extensive (SI- or MI-) quantity $\Psi$ along a process $\mathcal{P}$
defined by%
\begin{equation}
\Delta\Psi\doteq%
{\textstyle\int\nolimits_{\mathcal{P}}}
d\Psi, \label{Accumulation}%
\end{equation}
the above relationship becomes%
\begin{subequations}
\begin{equation}
\Delta W=-\Delta R+\Delta_{\text{i}}W, \label{DWork-relation}%
\end{equation}
where the meaning of various accumulated quantities follows Eq.
(\ref{Accumulation}).

We now describe how to evaluate microwork $\Delta W(\overline{\gamma
}_{\text{\textsf{ab}}})$ for a mixed trajectory. As the initial and final
microstates can be different, we need to divide $\overline{\gamma
}_{\text{\textsf{ab}}}$ into a disjoint collection of several $\gamma
_{\text{\textsf{a}}}$'s whose union gives $\overline{\gamma}%
_{\text{\textsf{ab}}}$:%
\end{subequations}
\begin{equation}
\overline{\gamma}_{\text{\textsf{ab}}}\equiv\cup_{l}\gamma_{\text{\textsf{a}%
}_{l}}\label{Mix-Pure-TrajectoryCombination}%
\end{equation}
For each $\gamma_{\text{\textsf{a}}_{l}}$, we are dealing with a single
microstate \textsf{a}$_{l}$ so we define%
\begin{subequations}
\begin{equation}
\Delta W(\overline{\gamma}_{\text{\textsf{ab}}})\doteq%
{\textstyle\sum\nolimits_{l}}
\Delta W(\gamma_{\text{\textsf{a}}_{l}}).\label{Mix-Trajectory-SI-Microwork}%
\end{equation}
We similarly define%
\begin{equation}
\Delta R(\overline{\gamma}_{\text{\textsf{ab}}})\doteq%
{\textstyle\sum\nolimits_{l}}
\Delta R(\gamma_{\text{\textsf{a}}_{l}}).\label{Mix-Trajectory-MI-Microwork}%
\end{equation}

In the Hamiltonian formulation used at the microscopic level so that it
becomes relevant for microstates \textsf{a} or their trajectories $\gamma$
along $\mathcal{P}$, we are interested in microquantities and their
fluctuations. For example, we are interested in SI-microquantities such as
$\Delta W(\overline{\gamma}_{\text{\textsf{ab}}})$, the work done by the
system along the trajectory $\overline{\gamma}_{\text{\textsf{ab}}}$, or the
MI-microwork $\Delta R(\overline{\gamma}_{\text{\textsf{ab}}})$ done on the
system; see Eq. (\ref{Accumulation-Trajectory}) for the definition. Their
relationship follows from Eq. (\ref{DWork-relation}):%
\end{subequations}
\begin{equation}
\Delta W(\overline{\gamma}_{\text{\textsf{ab}}})=-\Delta R(\overline{\gamma
}_{\text{\textsf{ab}}})+\Delta_{\text{i}}W(\overline{\gamma}%
_{\text{\textsf{ab}}}), \label{DWork-relation-Trajectory}%
\end{equation}
where $\Delta_{\text{i}}W(\overline{\gamma}_{\text{\textsf{ab}}})$ is the
internal microwork due to force imbalance that must be present for
irreversibility and strongly depends on the nature of the trajectory. However,
there is no restriction on the sign of $\Delta_{\text{i}}W(\overline{\gamma
}_{\text{\textsf{ab}}})$ as is for $\Delta_{\text{i}}W\geq0$.

\begin{remark}
\label{Reamrk-Internal-Irreversible}\textbf{ }We will always call any
$\Delta_{\text{i}}$-microquantity an \emph{internal} microquantity since its
sign is not restricted. The thermodynamic average of such a quantity satisfies
the restriction imposed by the second law and has a particular sign. We can
call such a macroquantity an \emph{irreversible} quantity. We will always make
this distinction. Thus, $\Delta_{\text{i}}W(\overline{\gamma}%
_{\text{\textsf{ab}}})$ is an internal microquantity with no sign restriction,
but the average $\Delta_{\text{i}}W$ satisfies the restriction due to the
second law and is called the irreversible work.
\end{remark}

Let $\Sigma$\ be described by its Hamiltonian $\mathcal{H}(\lambda)$ in which
the work parameter $\lambda$ can be manipulated from the outside through the
medium by applying a "force" $F_{0}$. A particular realization of the system
corresponds to a microstate \textsf{a}$_{k}$ of the system so the discussion
below should be taken for at the microstate level. We denote the value of the
Hamiltonian for \textsf{a}$_{k}$ by $E_{k}$. The restoring force in the system
conjugate to $\lambda$ is given by the SI-force%
\begin{equation}
F_{k}=-\partial E_{k}(\lambda)/\partial\lambda,\label{SI-Force}%
\end{equation}
which need not be equal to $F_{0}$ in magnitude. The SI-microwork
$dW_{k}=F_{k}d\lambda$ is given by
\begin{equation}
dW_{k}=-d_{\lambda}E_{k}(\lambda),\label{SI-Work}%
\end{equation}
the negative of the change $d_{\lambda}E_{k}(\lambda)$ in the energy
$E_{k}(\lambda)=\mathcal{H}_{k}(\lambda)$ due to the variation of $\lambda$ as
exemplified by the derivative operator $d_{\lambda}$. We consider a trajectory
$\gamma_{k}$ associated with \textsf{a}$_{k}$ during which $\lambda$ changes
from its initial value $\lambda_{\text{in}}$ to its final value $\lambda
_{\text{fn}}$. The above definition of the SI-force $F_{k}$ and the microwork
$dW_{k}$ above justifies Eq. (\ref{SI-Work-HamiltonianRelation}), which
appears as an identity due to the definition of the force $F_{k}$.

\begin{theorem}
\label{Theorem-Work-Energy-Theorem}It is the SI-microwork that is directly
related to the change in the Hamiltonian $\mathcal{H}_{k}(\lambda)$, known as
the \emph{Work-Energy Theorem }%
\cite{Gujrati-GeneralizedWork,Gujrati-GeneralizedWork-Expanded}:%
\begin{equation}
\Delta W(\gamma_{k})\equiv-\Delta_{\lambda}\mathcal{H}_{k}(\lambda
)\doteq\mathcal{H}_{k}(\lambda_{\text{in}})-\mathcal{H}_{k}(\lambda
_{\text{fn}}), \label{SI-Work-HamiltonianRelation}%
\end{equation}
whether internal work $\Delta_{\text{i}}W_{k}$ is present or not.
\end{theorem}

\begin{proof}
The theorem follows trivially and directly from the definition of the SI-force
and work; see Eqs. (\ref{SI-Force}) and (\ref{SI-Work}) for $F_{k}$ and
$dW_{k}$, respectively. Accumulating $dW_{k}$ along $\gamma_{k}$ immediately
proves Eq. (\ref{SI-Work-HamiltonianRelation}) as an identity.
\end{proof}

The proof does not require any relation between $F_{k}$ and $F_{0}$ so force
imbalance can exist, which will result in a nonzero internal microwork
$\Delta_{\text{i}}W(\gamma_{k})$.

\begin{remark}
\label{Remark-DR-DH-relation}It is clear that%
\begin{equation}
\Delta R(\gamma_{k})=\mathcal{H}_{k}(\lambda_{\text{fn}})-\mathcal{H}%
_{k}(\lambda_{\text{in}}) \label{MI-Work-HamiltonianRelation}%
\end{equation}
only if $\Delta_{\text{i}}W(\gamma_{k})\equiv0$. While $\Delta_{\text{i}%
}W(\gamma_{k})\equiv0$ implies $\Delta_{\text{i}}W=0$, the converse is not
always true as it is possible to have several nonzero $\Delta_{\text{i}%
}W(\gamma_{k})$ and yet have their thermodynamic average $\Delta_{\text{i}%
}W=0$. In the former case, there is no fluctuation in $\Delta_{\text{i}%
}W(\gamma_{k})$, while the fluctuations are present in the latter case.
Therefore, the validity of Eq. (\ref{MI-Work-HamiltonianRelation}) requires a
very strong statement about the absence of fluctuations
\cite{Gujrati-GeneralizedWork,Gujrati-GeneralizedWork-Expanded}.
\end{remark}

We clearly see that the $\mu$NEQT is more appropriate than the $\mathring{\mu
}$NEQT to deal with microscopic fluctuations and force imbalance.

\section{Crooks' Approach\label{Sec-CrooksApproach}\textsc{\qquad}}

We now review and elaborate the approach taken by Crooks. Let us consider a
finite segment $\delta\mathcal{P}$ of the process $\mathcal{P}$ and the
portion $\delta\overline{\gamma}_{\text{\textsf{a}}^{\prime}\text{\textsf{a}%
}^{\prime\prime}}$ of $\overline{\gamma}_{\text{\textsf{ab}}}$\ associated
with $\delta\mathcal{P}$. Crooks identifies the change in the energy of the
microstate \textsf{a}$^{\prime}$ along $\delta\overline{\gamma}%
_{\text{\textsf{a}}^{\prime}\text{\textsf{a}}^{\prime\prime}}$ by the
performance of microwork $\Delta R_{\text{C}}$ by $\Sigma_{\text{w}}$ on the
microstate, see Eq. (\ref{Crooks-Work}), as the microwork parameter
changes$\ (\lambda\rightarrow\lambda^{\prime})$ for the system; the microstate
itself and its probability do not change. During the exchange of energy
$\Delta_{\text{e}}Q_{\text{C}}$ with $\Sigma_{\text{h}}$, see Eq.
(\ref{Crooks-Heat}), the exchange causes microstate and its probability to
change but $\lambda^{\prime}$ does not change. In general, the instantaneous
energy of a microstate \textsf{a}$_{k}$ is determined by the current value of
$\lambda$ at that instant and we denote it by either $E_{k}(\lambda)~$or
$E(k,\lambda)$, depending on which form is more suitable in the context. If
both $E(\lambda)$ and $\lambda$ appear as arguments in a quantity, we will
suppress $\lambda$ in $E(\lambda)$ and use $E,\lambda$ as the argument. The
energy change along $\delta\overline{\gamma}$ is determined by using the
first-law extension for the trajectory given in Eq.
(\ref{Trajectory-First-Law}).

Crooks considers the following protocol for the change $\Delta E_{\overline
{\gamma}}$ over the duration $(0,\tau)$ by introducing contiguous
nonoverlapping $n$ time intervals $\delta_{l}\doteq(t_{l},t_{l+1})$ associated
with $\delta\mathcal{P}_{l},l=0,1,2,\cdots,n-1$, each of which is further
divided into two nonoverlapping parts: an \emph{earlier} and a \emph{latter}
part $\delta_{l}^{\prime}\doteq(t_{l},t_{l+1}^{\prime})$ and \ $\delta
_{l}^{\prime\prime}\doteq(t_{l+1}^{\prime},t_{l+1}),t_{l+1}^{\prime}<t_{l+1}$,
respectively, with $\delta_{l}^{\prime}\cup\delta_{l}^{\prime\prime}%
\doteq\delta_{l}$. (We will establish below in Sec. \ref{Sec-Consequences}
that $\tau$ must be taken to be $\tau_{\text{eq}}$ as the final macrostate in
the Crooks process turns out to be an EQ-macrostate, contrary to what is
commonly believed.) During $\delta_{l}^{\prime}$, only microwork is exchanged
and during $\delta_{l}^{\prime\prime}$, only microheat is exchanged. The
trajectory $\overline{\gamma}_{\text{C}}\equiv\overline{\gamma}_{\text{C}%
\mathsf{ab}}$ is uniquely specified by the chronologically ordered microstate
sequence $\{k_{0}\equiv\mathsf{a},k_{1},k_{2},\cdots,k_{n-1},k_{n}%
\equiv\mathsf{b}\}$, in which $k_{m},m=0,1,2,\cdots,n$ refers to the
microstate \textsf{a}$_{k_{m}}(t_{m})$ at time $t_{m}$. The time $t_{m}$ also
specifies the value of $\lambda=\lambda_{m}$\ and the corresponding
probability $p_{k_{m}}(t_{m})$; here $k_{m}\doteq k\left(  t_{m}\right)
\in\left\{  1,2,\cdots,r\right\}  $ indexes the $r$\ microstates. Crooks
further assumes that the sequence $\left\{  k_{m}\right\}  $ (along with the
associated sequence $\left\{  p_{k_{m}}(t_{m})\right\}  $
\cite{Note-microstates}) forms a (time-inhomogeneous) Markov chain, which is
divided into $n$ contiguous segments $\left\{  k_{l},k_{l+1}\right\}  $ in
one-to-one correspondence with the $n$ intervals $\delta_{l}$\ so that the
integration over $(0,\tau)$ is \emph{approximated} by a sum over these
intervals, its accuracy getting better as $n$ increases. Each microstate is
specified by its energy and work parameter at time $t_{m}$. The probability
$p_{k_{m}}$ of a microstate $k_{m}$\ may or may not have its equilibrium value.

\begin{definition}
\label{Def-EQ-NEQ-microstates}We will call a microstate (by an abuse of the
concept only valid for macrostates) an EQ-microstate or a NEQ-microstate if
the corresponding probability is the equilibrium probability or not,
respectively; the latter will be denoted by appending a prime on the
microstate label such as $k^{\prime}$.
\end{definition}

Crooks does not make any such distinction but we will find it very useful in
our discussion and also when identifying the backward trajectories. During
$\delta_{l}^{\prime}$ over which $\Sigma$ interacts with $\Sigma_{\text{w}}$,
$\lambda_{l}$ changes to $\lambda_{l+1}$ but not the microstate $k_{l}$.
During this interval,\ microwork done \emph{on} the system is taken to be the
energy change of the microstate:
\begin{subequations}
\begin{equation}
\Delta R_{\text{C}}^{(l)}(k_{l},\lambda_{l}\rightarrow\lambda_{l+1})\doteq
E(k_{l}^{\prime},\lambda_{l+1})-E(k_{l},\lambda_{l})\label{Crooks-Work}%
\end{equation}
with no change in the probability $p(k_{l},\lambda_{l})$. This is inconsistent
with the work-energy theorem according to which%
\begin{equation}
\Delta W^{(l)}(k_{l},\lambda_{l}\rightarrow\lambda_{l+1})\doteq E(k_{l}%
,\lambda_{l})-E(k_{l}^{\prime},\lambda_{l+1}),\label{Gujrati-Work}%
\end{equation}
as follows from Eq. (\ref{SI-Work-HamiltonianRelation}). We have also used the
notation $k_{l}^{\prime}$ above to denote the microstate $k_{l}\left(
t_{l+1}^{\prime}\right)  $ that results at the end of the work protocol at
$t_{l+1}^{\prime}$. It represents a NEQ-microstate as will become clear in
Sec. \ref{Sec-DetailedBalance}; see also Conclusion
\ref{Conclusion-Microstate-Nature}. The only difference between $k_{l}$ and
$k_{l}^{\prime}$ is in their probabilities. At present, this difference is not
relevant so we can overlook the prime for the moment. The work $\Delta
_{\text{e}}R_{\text{C}}(\overline{\gamma}_{\text{C,\textsf{ab}}})$ in Eq.
(\ref{Trajectory-First-Law}) is given by%
\begin{equation}
\Delta R_{\text{C}}(\overline{\gamma}_{\text{C,\textsf{ab}}})=%
{\textstyle\sum\nolimits_{l}}
\Delta R_{\text{C}}^{(l)}(k_{l},\lambda_{l}\rightarrow\lambda_{l+1}%
),\label{Crooks-Work-Total}%
\end{equation}
with fixed probability set $\left\{  p(k_{l},\lambda_{l})\right\}
_{l=0,1,\cdots,n-1}$; compare with Eq. (\ref{Mix-Trajectory-MI-Microwork}).
Comparing $\Delta R_{\text{C}}^{(l)}$ with the energy change in Eq.
(\ref{SI-Work}), we conclude that Crooks has implicitly assumed that Eq.
(\ref{MI-Work-HamiltonianRelation}) remains valid so no force imbalance is
allowed; see Remark \ref{Remark-DR-DH-relation}. The work parameter
$\lambda(t)=\lambda_{l+1}$\ is held fixed over $\delta_{l}^{\prime\prime}$
during which the microstate changes from $k_{l}$ to $k_{l+1}$ by exchanging
energy with $\Sigma_{\text{h}}$ in the form of microheat
\end{subequations}
\begin{subequations}
\begin{equation}
\Delta_{\text{e}}Q_{\text{C}}(k_{l}\rightarrow k_{l+1},\lambda_{l+1}%
)=E(k_{l+1},\lambda_{l+1})-E(k_{l},\lambda_{l+1});\label{Crooks-Heat}%
\end{equation}
this exchange occurs in conjunction with probability change as will be
discussed in Sec. \ref{Sec-DetailedBalance}; the latter is however not
involved in $\Delta_{\text{e}}Q_{\text{C}}(k_{l}\rightarrow k_{l+1}%
,\lambda_{l+1})$. The exchange heat $\Delta_{\text{e}}Q_{\text{C}}%
(\overline{\gamma}_{\text{C,\textsf{ab}}})$ in Eq. (\ref{Trajectory-First-Law}%
) is given by%
\begin{equation}
\Delta_{\text{e}}Q_{\text{C}}(\overline{\gamma}_{\text{C,\textsf{ab}}})=%
{\textstyle\sum\nolimits_{l}}
\Delta Q_{\text{C}}^{(l)}(k_{l}\rightarrow k_{l+1},\lambda_{l+1}%
).\label{Crooks-Heat-Total}%
\end{equation}

The forward trajectory is specified uniquely by the following sequence of
microstates, where we reinsert the prime on the microstates as necessary:%
\end{subequations}
\begin{align}
\overline{\gamma}^{\text{(F)}} &  :\overbrace{k_{0}\overset{\text{w}%
}{\rightarrow}k_{0}^{\prime}\overset{\text{h}}{\rightarrow}k_{1}}%
\underbrace{\overset{\text{w}}{\rightarrow}k_{1}^{\prime}\overset{\text{h}%
}{\rightarrow}k_{2}}\overbrace{\overset{\text{w}}{\rightarrow}k_{2}^{\prime
}\overset{\text{h}}{\rightarrow}k_{3}}\cdots\nonumber\\
&  \cdots\underbrace{k_{n-1}\overset{\text{w}}{\rightarrow}k_{n-1}^{\prime
}\overset{\text{h}}{\rightarrow}k_{n}},\label{Forward-trajectory}%
\end{align}
in which the triplet $k_{l}\overset{\text{w}}{\rightarrow}k_{l}^{\prime
}\overset{\text{h}}{\rightarrow}k_{l+1}$ corresponds to the interval
$\delta_{l}$: the w-arrow $k_{l}\overset{\text{w}}{\rightarrow}k_{l}^{\prime}$
refers to the deterministic interaction with $\Sigma_{\text{w}}$ during
$\delta_{l}^{\prime}$ and the h-arrow $k_{l}^{\prime}\overset{\text{h}%
}{\rightarrow}k_{l+1}$to the stochastic interaction with $\Sigma_{\text{h}}$
during $\delta_{l}^{\prime\prime}$. We remark that in each interval
$\delta_{l}$, microwork is performed before microheat is exchanged. These two
interactions are similar to the driven and reequilibration stages used in the
derivation of the JE \cite{Jarzynski}. Thus, the Crooks process is a sequence
of $n$ different Jarzynski processes $\delta\mathcal{P}_{l}$ occurring during
$\delta_{l};l=0,1,\cdots,n-1$. As we will see in Sec.
\ref{Sec-DetailedBalance}, the driven stage drives an EQ-microstate $k_{l}$ to
a NEQ-microstate $k_{l}^{\prime}$, which is then brought to an EQ-microstate
$k_{l+1}$ during the equilibration stage, which means that each Jarzynski
process $\delta\mathcal{P}_{l}$ takes an EQ-microstate to an EQ-microstate. We
summarize this as a

\begin{conclusion}
The w-arrow $\overset{\text{w}}{\rightarrow}$ always takes an EQ-microstate to
a NEQ-microstate, whereas the h-arrow $\overset{\text{h}}{\rightarrow}%
$\ always takes a NEQ-microstate to an EQ-microstate in a forward trajectory;
the two arrows denote the interactions with $\Sigma_{\text{w}}$ and
$\Sigma_{\text{h}}$, respectively.
\end{conclusion}

We also note that because of the presence of intermediate NEQ-microstates
$k_{l}^{\prime}$, $\overline{\gamma}^{\text{(F)}}$ contains more information
than the original Crooks trajectory $\overline{\gamma}_{\text{C}}^{\text{(F)}%
}$.

\section{Microscopic Detailed Balance\label{Sec-DetailedBalance}}

As introduced above, the microwork during $\delta_{l}^{\prime}$ is purely
mechanical, and is carried out at fixed $p_{k_{l}}(E_{l},\lambda_{l},t_{l})$.
Even if $k_{l}$ is an EQ-macrostate at $t_{l}$ [$p_{k_{l}}(E_{l}(\lambda
_{l}),\lambda_{l},t_{l})=p_{k_{l}\text{eq}}(E_{l},\lambda_{l})$], $k_{l}$ at
$t_{l+1}^{\prime}\ $has $p_{k_{l}}(E_{l}^{\prime}(\lambda_{l+1}),\lambda
_{l+1},t_{l+1}^{\prime})=p_{k_{l}\text{eq}}(E_{l}(\lambda_{l}),\lambda_{l})$
as its probability, which is not the equilibrium probability $p_{k_{l}%
\text{eq}}(E_{l}^{\prime}(\lambda_{l+1}),\lambda_{l+1})$ at $\lambda
(t_{l+1}^{\prime})=\lambda_{l+1}$. Therefore, the microstate $k_{l}$ at
$t_{l+1}^{\prime}$ after microwork has been performed is a NEQ-microstate, and
should be denoted with a prime following our convention; See Definition
\ref{Def-EQ-NEQ-microstates}. This explains the reason for introducing
$k_{l}^{\prime}$ in Eq. (\ref{Forward-trajectory}).

The microheat transfer over $\delta_{l}^{\prime\prime}$ is accompanied by
probability changes as we now discuss using the Markovian property encoded in
the transition matrix $\mathbf{T}^{(l)}(\delta_{l}^{\prime\prime})$. The
transition matrix tells us how a NEQ-microstate $k_{l}^{\prime}$ transforms
into $k_{l+1}$. Suppressing $l$ for the moment, we introduce its matrix
elements, the one-step transition probabilities $T_{ij}(\delta^{\prime\prime
})\equiv T(\left.  j\right\vert i\mid\delta^{\prime\prime})$ from microstate
$i^{\prime}\equiv i(t^{\prime})$ to a microstate $j(t^{\prime}+\delta
^{\prime\prime})$ at given $E_{j}^{\prime}$ and $\lambda^{\prime}$; here,
$E_{j}^{\prime}$ and $\lambda^{\prime}$ refer to the microstate $i^{\prime}$
at time $t^{\prime}$. Recall that $\lambda^{\prime}$ is held fixed during
$\delta^{\prime\prime}$ so $\lambda^{\prime}\doteq\lambda(t^{\prime}%
)\equiv\lambda(t^{\prime}+\delta^{\prime\prime})$. Similarly, $E_{j}^{\prime
}\doteq E_{j}(t^{\prime})\equiv E_{j}(t^{\prime}+\delta^{\prime\prime})$. With
hindsight, we are using $j$ for the arriving microstate as we will see that it
an EQ-microstate. We are also suppressing the prime on $i$ in the subscript of
$T_{ij}(\delta^{\prime\prime})$. The matrix elements satisfy%
\begin{equation}%
{\textstyle\sum\nolimits_{j}}
T_{ij}(\delta^{\prime\prime})=1,\label{T-sumrule}%
\end{equation}
and determine the probabilities at the next time $t^{\prime\prime}\doteq
t^{\prime}+\delta^{\prime\prime}$ in the sequence%
\begin{equation}
p_{j}(t^{\prime\prime})=%
{\textstyle\sum\nolimits_{i}}
p_{i}(t^{\prime})T_{ij}(\delta^{\prime\prime}).\label{MarkovProperty1}%
\end{equation}
Given $\mathbf{T}(\delta^{\prime\prime})$, we can determine the new
probability at $t^{\prime\prime}$ in terms of the set $\left\{  p_{i}%
(t^{\prime})\right\}  $, where $p_{i}(t^{\prime})$ stands for $p_{i}%
(E_{i}^{\prime},\lambda^{\prime},t^{\prime})=p_{i}(E_{i}^{\prime}%
(\lambda^{\prime}),\lambda^{\prime},t^{\prime})$ and denotes the probability
of $i^{\prime}$ at given $E_{i}^{\prime}(\lambda^{\prime})=E_{i}(t^{\prime})$
and $\lambda^{\prime}=\lambda(t^{\prime})$. If we introduce a row probability
vector $\mathbf{p}(t^{\prime})\doteq\left\{  p_{i}(t^{\prime})\right\}  $, we
can express the above relation using matrix multiplication%
\begin{equation}
\mathbf{p}(t^{\prime\prime})=\mathbf{p}(t^{\prime})\mathbf{T}(\delta
^{\prime\prime})\label{MarkovProperty2}%
\end{equation}
for the given $\mathbf{E}^{\prime}(\lambda^{\prime})\doteq\{E_{i}^{\prime
}(\lambda^{\prime})\}$ and $\lambda^{\prime}$. It is evident that
$\mathbf{T}(\delta^{\prime\prime})$ depends on $\delta^{\prime\prime}$ in some
fashion. The dependence is not important to know for the discussion here. From
these conditions, we find that%
\begin{equation}
p_{j}(t^{\prime\prime})-p_{j}(t^{\prime})=%
{\textstyle\sum\nolimits_{i}}
\left[  p_{i}(t^{\prime})T_{ij}(\delta^{\prime\prime})-p_{j}(t^{\prime}%
)T_{ji}(\delta^{\prime\prime})\right]  .\label{ProbabilityChange}%
\end{equation}
It is clear that in the Markovian approximation, we need to know
$T_{ij}(\delta^{\prime\prime})$ to determine the probability change. To make
further progress, we need to make some assumption. It is convenient and very
common in the field to determine $T_{ij}(\delta^{\prime\prime})$ by accepting
the \emph{condition of microscopic reversibility}, also known as the condition
of \emph{the detailed balance}, which is valid for equilibrium probabilities
at fixed $\lambda^{\prime}$ at $t^{\prime\prime}$:%
\begin{equation}
p_{i\text{eq}}(E_{i}^{\prime},\lambda^{\prime})\overleftrightarrow{T}%
_{ij}(\delta^{\prime\prime})-p_{j\text{eq}}(E_{j}^{\prime},\lambda^{\prime
})\overleftrightarrow{T}_{ji}(\delta^{\prime\prime})=0,\forall
(i,j),\label{DetailedBalance}%
\end{equation}
where we have used the double arrow to indicate the condition of microscopic
reversibility, and have used the fact that the equilibrium probabilities do
not depend on time. We remark that $p_{i\text{eq}}(E_{i}^{\prime}%
,\lambda^{\prime})$ at $t^{\prime\prime}$ must not be confused with
$p_{i}(E_{i}^{\prime},\lambda^{\prime},t^{\prime})$ at $t^{\prime}$. So that
there cannot be any confusion, we write $p_{i\text{eq}}(E_{i}^{\prime}%
,\lambda^{\prime})$ as $p_{i\text{eq}}(E_{i},\lambda^{\prime})$ since
$E_{i}^{\prime}(\lambda^{\prime})$ at $t^{\prime}$ and $E_{i}(\lambda^{\prime
})$ at $t^{\prime\prime}$ are the same. The use of Eq. (\ref{DetailedBalance})
now determines the elements $\overleftrightarrow{T}_{ij}(\delta^{\prime\prime
})$ \cite{Crooks} as follows. We first conclude that%
\[
\overleftrightarrow{T}_{ij}(\delta^{\prime\prime})/\overleftrightarrow{T}%
_{ji}(\delta^{\prime\prime})=p_{j\text{eq}}(E_{j},\lambda^{\prime
})/p_{i\text{eq}}(E_{i},\lambda^{\prime}).
\]
It then follows from this that the choice
\begin{equation}
\overleftrightarrow{T}_{ij}(\delta^{\prime\prime})=p_{j\text{eq}}%
(E_{j},\lambda^{\prime}),\overleftrightarrow{T}_{ji}(\delta^{\prime\prime
})=p_{i\text{eq}}(E_{i},\lambda^{\prime}),\label{TransitionMatrix-Elements-DB}%
\end{equation}
is consistent with the condition of the detailed balance. A direct proof comes
from Theorem \ref{Theorem-LimitTheorem}. We can summarize the above as the
following claim:

\begin{claim}
\label{Claim-Time-independence of T}The matrix element $T_{ij}(\delta
^{\prime\prime})$ for the transition $i^{\prime}\rightarrow j$ is equal to the
equilibrium probability $p_{j\text{eq}}(E_{j},\lambda^{\prime})$,
i.e.,$\ p_{j\text{eq}}(E_{j}(\lambda^{\prime}),\lambda^{\prime})$ of the
arriving microstate $j$ at $t^{\prime\prime}$\ under the assumption of the
detailed balance given in Eq. (\ref{DetailedBalance}). As $p_{j\text{eq}%
}(E_{j}^{\prime},\lambda^{\prime})$ is time-independent, $T_{ij}%
(\delta^{\prime\prime})$ is also time indepndent so it must have no dependence
on $\delta^{\prime\prime}$.
\end{claim}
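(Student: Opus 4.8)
The plan is to prove the two halves of the claim in turn: first that the detailed-balance condition, \emph{supplemented by the equilibration role of the $\widetilde{\Sigma}_{\text{h}}$-interaction}, forces $\overleftrightarrow{T}_{ij}(\delta^{\prime\prime})=p_{j\text{eq}}(E_j,\lambda^{\prime})$, and second that this value, being an equilibrium probability at fixed $\lambda^{\prime}$, can carry no $\delta^{\prime\prime}$- or time-dependence. I would begin from the ratio already extracted from Eq.~(\ref{DetailedBalance}), namely $\overleftrightarrow{T}_{ij}(\delta^{\prime\prime})/\overleftrightarrow{T}_{ji}(\delta^{\prime\prime})=p_{j\text{eq}}(E_j,\lambda^{\prime})/p_{i\text{eq}}(E_i,\lambda^{\prime})$, and stress at the outset that this ratio, even together with the row-normalization $\sum_j\overleftrightarrow{T}_{ij}(\delta^{\prime\prime})=1$ of Eq.~(\ref{T-sumrule}), does \emph{not} by itself fix the individual elements (already for two states one free parameter survives). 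An additional input is therefore unavoidable, and this is where I would make the equilibration assumption explicit.

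The decisive step is the physical requirement, used throughout the Crooks construction, that the $\widetilde{\Sigma}_{\text{h}}$-interaction over $\delta^{\prime\prime}$ carries the departing NEQ-microstate $i^{\prime}\equiv k_l^{\prime}$ to an EQ-microstate $j\equiv k_{l+1}$ (Definition~\ref{Def-EQ-NEQ-microstates}). I would encode ``starting definitely in $i^{\prime}$'' as the input vector $p_i(t^{\prime})=\delta_{ii^{\prime}}$, so that the Markov relation~(\ref{MarkovProperty1}), $p_j(t^{\prime\prime})=\sum_i p_i(t^{\prime})\overleftrightarrow{T}_{ij}(\delta^{\prime\prime})$, collapses to $p_j(t^{\prime\prime})=\overleftrightarrow{T}_{i^{\prime}j}(\delta^{\prime\prime})$. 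The equilibration requirement $p_j(t^{\prime\prime})=p_{j\text{eq}}(E_j,\lambda^{\prime})$ then gives $\overleftrightarrow{T}_{i^{\prime}j}(\delta^{\prime\prime})=p_{j\text{eq}}(E_j,\lambda^{\prime})$ for \emph{every} departing $i^{\prime}$, so the matrix loses all memory of the departing state. I would then verify consistency: this rank-one form automatically respects $\sum_j p_{j\text{eq}}=1$ and reproduces the detailed-balance ratio above, which is exactly the consistency recorded in Eq.~(\ref{TransitionMatrix-Elements-DB}).

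For the time-independence I would argue in two complementary ways. The immediate one is that the right-hand side $p_{j\text{eq}}(E_j,\lambda^{\prime})$ is by definition an equilibrium weight at fixed $\lambda^{\prime}$, hence independent of $t^{\prime\prime}$ and of the interval length $\delta^{\prime\prime}$; since $\overleftrightarrow{T}_{ij}$ has been shown to equal it, the same is true of $\overleftrightarrow{T}_{ij}$. The structural confirmation is that the rank-one matrix is idempotent, $\sum_j p_{j\text{eq}}\,p_{k\text{eq}}=p_{k\text{eq}}$, i.e. $\mathbf{T}^2=\mathbf{T}$, so Chapman--Kolmogorov composition of two consecutive equilibration subintervals reproduces the \emph{same} matrix; a matrix invariant under doubling its interval cannot depend on $\delta^{\prime\prime}$. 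This idempotent form is also the fixed point of the ergodic limit, which is the route advertised in the text: Theorem~\ref{Theorem-LimitTheorem} identifies the single equilibrating step over $\delta^{\prime\prime}$ with the long-time limit $(\mathbf{T}^n)_{ij}\to p_{j\text{eq}}$ of the underlying microscopic dynamics, delivering precisely the projector onto equilibrium.

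The hard part is not algebraic but conceptual: justifying why a single $\delta^{\prime\prime}$-step should produce \emph{full} equilibration rather than merely preserving detailed balance. Detailed balance alone yields only stationarity of $p_{\text{eq}}$ and the ratio above, and it admits many non-projector solutions (Metropolis- or Glauber-type rules among them); what selects the rank-one matrix is the separate modeling assumption that each $\widetilde{\Sigma}_{\text{h}}$-interaction equilibrates the system over its interval, with Theorem~\ref{Theorem-LimitTheorem} supplying the rigorous bridge from the many-step ergodic limit to the effective one-step matrix. I would therefore state carefully that the claim rests on detailed balance \emph{and} this equilibration input, since it is exactly this combination, and not detailed balance in isolation, that renders every arriving microstate an EQ-microstate and underlies the paper's later conclusion that the corrected CFT is confined to reversible processes.
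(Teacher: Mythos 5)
Your proposal is correct, and at the decisive step it takes a genuinely different (and more careful) route than the paper. The paper's own argument runs: detailed balance, Eq.~(\ref{DetailedBalance}), yields the ratio $\overleftrightarrow{T}_{ij}/\overleftrightarrow{T}_{ji}=p_{j\text{eq}}/p_{i\text{eq}}$; the assignment in Eq.~(\ref{TransitionMatrix-Elements-DB}) is then put forward as \emph{the} choice consistent with that ratio, with the remark that ``a direct proof comes from Theorem~\ref{Theorem-LimitTheorem}''; the balanced (stationarity) condition Eq.~(\ref{BalancedT}) is then said to force the rank-one form Eq.~(\ref{BalancedT-form}) because ``such a matrix is the limiting matrix.'' You instead observe --- correctly --- that detailed balance plus the normalization Eq.~(\ref{T-sumrule}) leaves a free parameter even for two states (Metropolis- and Glauber-type rules being counterexamples to uniqueness), and you close the gap by making explicit the equilibration postulate: feeding the delta-concentrated vector $p_i(t^{\prime})=\delta_{ii^{\prime}}$ into Eq.~(\ref{MarkovProperty1}) and demanding an equilibrium output forces $\overleftrightarrow{T}_{i^{\prime}j}=p_{j\text{eq}}$ for every $i^{\prime}$, after which consistency with Eqs.~(\ref{DetailedBalance}) and (\ref{T-sumrule}) and the $\delta^{\prime\prime}$-independence (via idempotency and the fixed-$\lambda^{\prime}$ time-independence of $p_{j\text{eq}}$) follow cleanly. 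What your version buys is logical soundness: the paper's appeal to Theorem~\ref{Theorem-LimitTheorem} conflates $\mathbf{T}$ with $\lim_{k\rightarrow\infty}\mathbf{T}^{k}$ (stationarity of $\mathbf{p}_{\text{eq}}$ under $\mathbf{T}$ does not make $\mathbf{T}$ equal to its own limit), whereas your derivation makes the one-step-equilibration assumption the explicit hypothesis doing the work --- which also sharpens the paper's downstream conclusion, since it shows the rank-one form, and hence the reversibility result, rests on detailed balance \emph{together with} that modeling choice rather than on detailed balance alone. What the paper's looser phrasing buys is only brevity; as a literal deduction from Eq.~(\ref{DetailedBalance}) the Claim does not hold without your added input.
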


We see that under the assumption of the detailed balance, each row of the
matrix $\mathbf{T}$ is the equilibrium row vector $\mathbf{p}_{\text{eq}%
}(\lambda^{\prime})$:%
\begin{equation}
\overleftrightarrow{\mathbf{T}}=\left(
\begin{array}
[c]{c}%
\mathbf{p}_{\text{eq}}(\mathbf{E}(\lambda^{\prime}),\lambda^{\prime})\\
\mathbf{p}_{\text{eq}}(\mathbf{E}(\lambda^{\prime}),\lambda^{\prime})\\
\vdots\\
\mathbf{p}_{\text{eq}}(\mathbf{E}(\lambda^{\prime}),\lambda^{\prime})
\end{array}
\right)  ,\label{BalancedT-form}%
\end{equation}
here, the equilibrium vector refers to $\mathbf{p}_{\text{eq}}(\mathbf{E}%
(\lambda^{\prime}),\lambda^{\prime})\doteq\left\{  p_{j\text{eq}}%
(E_{j}(\lambda^{\prime}),\lambda^{\prime})\right\}  $ at $t^{\prime\prime}$,
which\ is \emph{invariant} under the transition matrix $\overleftrightarrow
{\mathbf{T}}$:%
\begin{equation}
\mathbf{p}_{\text{eq}}(\mathbf{E}(\lambda^{\prime}),\lambda^{\prime
})=\mathbf{p}_{\text{eq}}(\mathbf{E}(\lambda^{\prime}),\lambda^{\prime
})\overleftrightarrow{\mathbf{T}}\label{BalancedT}%
\end{equation}
for each interval $\delta^{\prime\prime}$. Such a transition matrix is said be
\emph{balanced} \cite{Crooks-PRE-2000} over $\delta^{\prime\prime}$. It is
easy to conclude from the above invariance equation that $\overleftrightarrow
{\mathbf{T}}$ must be a stationary transition matrix. It is known from the
theory of Markov chains that such a matrix is the limiting matrix; see Eq.
(\ref{LimitT}).

While the form of Eqs. (\ref{BalancedT-form}) and (\ref{BalancedT}) is trivial
based on Eq. (\ref{DetailedBalance}), it is well known that a balanced
$\overleftrightarrow{\mathbf{T}}$ in Eq. (\ref{BalancedT}) also has the same
form as in Eq. (\ref{BalancedT-form}); see the Fundamental Limit theorem or
Doeblin's theorem\emph{ }of Markov chains \cite{Stroock,Beichelt}. We simply
state the theorem below:

\begin{theorem}
\label{Theorem-LimitTheorem}\textbf{Fundamental Limit Theorem}: For the
transition matrix $\mathbf{T}$ for a regular Markov chain with finite
number~of states$~(r<\infty)$,
\begin{equation}
\lim_{k\rightarrow\infty}\mathbf{T}^{k}=\overleftrightarrow{\mathbf{T}}
\label{LimitT}%
\end{equation}
given in Eq. (\ref{BalancedT-form}) and satisfying the balanced condition in
Eq. (\ref{BalancedT}). In particular,%
\begin{equation}
\lim_{k\rightarrow\infty}\mathbf{T}_{ij}^{k}=\overleftrightarrow{T}%
_{ij}=p_{j\text{eq}}>0\text{ for }\forall k. \label{LimitT-component}%
\end{equation}

\end{theorem}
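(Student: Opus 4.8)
The plan is to establish the convergence (\ref{LimitT}) directly by an elementary contraction argument on the columns of the iterated matrix $\mathbf{T}^{k}$, and only afterwards to identify the common limiting row with the equilibrium vector $\mathbf{p}_{\text{eq}}$. First I would reduce the regularity hypothesis to the case of a \emph{strictly positive} transition matrix: by definition of a regular chain there is an integer $m$ for which every entry of $\mathbf{T}^{m}$ is positive, so setting $\delta\doteq\min_{i,j}(\mathbf{T}^{m})_{ij}>0$ I may run the argument on the subsequence $\mathbf{T}^{mk}=(\mathbf{T}^{m})^{k}$ and recover the full-sequence limit at the end from the stationarity of the limit matrix. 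For brevity I henceforth assume $\delta\doteq\min_{i,j}T_{ij}>0$. For each fixed column index $j$ I then introduce the running extremes $M_{k}^{(j)}\doteq\max_{i}(\mathbf{T}^{k})_{ij}$ and $m_{k}^{(j)}\doteq\min_{i}(\mathbf{T}^{k})_{ij}$.

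The central observation is that each entry $(\mathbf{T}^{k+1})_{ij}=\sum_{l}T_{il}(\mathbf{T}^{k})_{lj}$ is a convex combination of the numbers $\{(\mathbf{T}^{k})_{lj}\}_{l}$, the weights $T_{il}$ being nonnegative and summing to one by (\ref{T-sumrule}). Hence $m_{k}^{(j)}\le m_{k+1}^{(j)}\le M_{k+1}^{(j)}\le M_{k}^{(j)}$, so $M_{k}^{(j)}$ is nonincreasing, $m_{k}^{(j)}$ is nondecreasing, and both converge. The step I expect to be the main obstacle is the quantitative closing of the gap: choosing rows $i^{+}$ and $i^{-}$ that realize $M_{k+1}^{(j)}$ and $m_{k+1}^{(j)}$ and writing $M_{k+1}^{(j)}-m_{k+1}^{(j)}=\sum_{l}(T_{i^{+}l}-T_{i^{-}l})(\mathbf{T}^{k})_{lj}$, I would split the sum into its positive and negative parts and use $\sum_{l}(T_{i^{+}l}-T_{i^{-}l})=0$ together with $T_{i^{-}l}\ge\delta$ to obtain the geometric contraction $M_{k+1}^{(j)}-m_{k+1}^{(j)}\le(1-r\delta)(M_{k}^{(j)}-m_{k}^{(j)})$, with $1-r\delta<1$ the Dobrushin coefficient of $\mathbf{T}$, which is strictly below one precisely because $\delta>0$. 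This forces $M_{k}^{(j)}-m_{k}^{(j)}\to0$, so every column of $\mathbf{T}^{k}$ flattens to a single limiting value $\pi_{j}$ independent of the departing row $i$; this is exactly the assertion that $\lim_{k}\mathbf{T}^{k}$ has identical rows, as displayed in (\ref{BalancedT-form}).

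It remains to identify $\pi\doteq(\pi_{1},\dots,\pi_{r})$ and to check positivity. Passing to the limit in $\mathbf{T}^{k+1}=\mathbf{T}^{k}\mathbf{T}$ gives $\pi\mathbf{T}=\pi$, while the preservation of row sums gives $\sum_{j}\pi_{j}=1$, so $\pi$ is a stationary probability vector; moreover $m_{1}^{(j)}\ge\delta>0$ and $m_{k}^{(j)}$ is nondecreasing, so $\pi_{j}\ge\delta>0$, which supplies the strict inequality in (\ref{LimitT-component}). Since the contraction just proved also forces uniqueness of the stationary distribution, and since the equilibrium vector $\mathbf{p}_{\text{eq}}$ was already shown to be invariant in (\ref{BalancedT}), I conclude $\pi=\mathbf{p}_{\text{eq}}$, i.e. $\overleftrightarrow{T}_{ij}=p_{j\text{eq}}$, which completes (\ref{LimitT}) and (\ref{LimitT-component}). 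As an alternative route I would note that the same conclusion follows from the Perron--Frobenius theorem applied to the primitive matrix $\mathbf{T}$: the eigenvalue $1$ is simple and strictly dominant in modulus, so $\mathbf{T}^{k}$ converges to the rank-one spectral projection onto the Perron eigenspace, namely the matrix with every row equal to $\mathbf{p}_{\text{eq}}$; the contraction argument is preferable here, however, since it is self-contained and delivers the positivity and the geometric rate simultaneously.
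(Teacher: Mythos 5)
Your proposal is correct, but it takes a genuinely different route from the paper for the simple reason that the paper offers no proof at all: it merely defers to the textbooks \cite{Stroock,Beichelt}. What you supply is the standard self-contained Doeblin/contraction argument, and the details check out. The reduction to a strictly positive matrix via regularity, the monotonicity of the column extremes $M_{k}^{(j)}$ and $m_{k}^{(j)}$ from the convex-combination structure of $(\mathbf{T}^{k+1})_{ij}=\sum_{l}T_{il}(\mathbf{T}^{k})_{lj}$, and the bound $\sum_{l\in S^{+}}(T_{i^{+}l}-T_{i^{-}l})\le 1-r\delta$ obtained by splitting into the sets where $T_{i^{+}l}-T_{i^{-}l}$ is positive or not are all sound, and the recovery of the full-sequence limit from the subsequence $\mathbf{T}^{mk}$ via $\mathbf{T}^{mk+s}\to\overleftrightarrow{\mathbf{T}}\,\mathbf{T}^{s}=\overleftrightarrow{\mathbf{T}}$ works. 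What your approach buys beyond the citation is substantial: an explicit geometric rate $(1-r\delta)^{k}$, the quantitative positivity $\pi_{j}\ge\delta$ needed for the strict inequality in Eq. (\ref{LimitT-component}), and uniqueness of the stationary vector, none of which the paper makes visible. The one point to tighten is the final identification $\pi=\mathbf{p}_{\text{eq}}$: you invoke Eq. (\ref{BalancedT}), which asserts invariance of $\mathbf{p}_{\text{eq}}$ under the \emph{limit} matrix $\overleftrightarrow{\mathbf{T}}$, whereas your uniqueness argument requires $\mathbf{p}_{\text{eq}}$ to be stationary for $\mathbf{T}$ itself; in the paper's setting this follows immediately by summing the detailed-balance relation (\ref{DetailedBalance}) over the departing index and using the row-sum rule (\ref{T-sumrule}), so the gap is cosmetic but worth closing explicitly.
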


\begin{proof}
For proof, see Refs. \cite{Stroock,Beichelt} or any other text book on Markov chains.
\end{proof}

Therefore, from now on, we will take the balanced $\overleftrightarrow
{\mathbf{T}}$ to be given by Eq. (\ref{BalancedT-form}).

We now insert the index $l$ so that\ the matrix elements of such a transition
matrix (recall that $\lambda^{\prime}=\lambda(t_{l+1}^{\prime})=\lambda
(t_{l+1})=\lambda_{l+1}$ and $E_{j}^{\prime}\doteq E_{j}(t_{l+1}^{\prime
})\equiv E_{j}(t_{l+1})=E_{j}(\lambda_{l+1})$) is uniquely determined and
given by%
\begin{equation}
\overleftrightarrow{T}_{i,j}^{(l)}=p_{j\text{eq }}(E_{j}(\lambda
_{l+1}),\lambda_{l+1}),\forall i,l,\label{TransitionMatrix-Unique}%
\end{equation}
where $p_{j\text{eq}}(E_{j},\lambda_{l+1})$ is the equilibrium probability of
the equilibrium $j$th microstate at time $t_{l+1}$; see Eq.
(\ref{BalancedT-form}). Using the above transition matrix $\overleftrightarrow
{\mathbf{T}}^{(l)}$, we can determine how any arbitrary row probability vector
$\mathbf{p}$\ changes over $\delta_{l}^{\prime\prime}$:
\begin{equation}
\mathbf{p}(\mathbf{E}_{l+1},\lambda_{l+1},t_{l+1})=\mathbf{p}(\mathbf{E}%
_{l+1}^{\prime},\lambda_{l+1},t_{l+1}^{\prime})\overleftrightarrow{\mathbf{T}%
}^{(l)},\label{TransitionMatrix-ArbitraryVector-Effect}%
\end{equation}
where $\mathbf{p}(\mathbf{E}_{l+1},\lambda_{l+1},t_{l+1}),\mathbf{E}%
_{l+1}\doteq\{E_{j}(\lambda_{l+1})\}$, is used as a short hand notation for
the vector $\left\{  p_{j}(E_{j}(\lambda_{l+1}),\lambda_{l+1},t_{l+1}%
)\right\}  $ as above. We first recognize that each row in
$\overleftrightarrow{\mathbf{T}}^{(l)}$\ is the same row vector $\mathbf{p}%
_{\text{eq}}(\lambda_{l+1})$ in accordance with Eq. (\ref{BalancedT-form}), a
fact that does not seems to have been recognized in the current literature.
This means that the $j$th column in $\overleftrightarrow{\mathbf{T}}^{(l)}$
has the same entry $p_{j\text{eq}}(\lambda_{l+1})$ for all rows. The effect of
this is the following surprising observation that, after evaluating the right
side of the above equation, we obtain%
\begin{equation}
\mathbf{p}(\mathbf{E}_{l+1},\lambda_{l+1},t_{l+1})=\mathbf{p}_{\text{eq}%
}(\mathbf{E}_{l+1},\lambda_{l+1})\label{TransitionMatrix-Unique-Effect}%
\end{equation}
for any arbitrary row probability vector $\mathbf{p}(\mathbf{E}_{l+1}^{\prime
},\lambda_{l+1},t_{l+1}^{\prime})$, not necessarily an equilibrium vector
$\mathbf{p}_{\text{eq}}(\mathbf{E}_{l+1}^{\prime},\lambda_{l+1})$ at
$t_{l+1}^{\prime}$. Despite the arbitrary $\mathbf{p}(\mathbf{E}_{l+1}%
^{\prime},\lambda_{l+1},t_{l+1}^{\prime})$, the result of $\overleftrightarrow
{\mathbf{T}}^{(l)}$ on it is to yield the equilibrium vector $\mathbf{p}%
_{\text{eq}}(\mathbf{E}_{l+1},\lambda_{l+1})$. We illustrate this by a simple
example for $r=2$ and some arbiter $l$, which we again suppress. The
corresponding $\overleftrightarrow{\mathbf{T}}=\overleftrightarrow{\mathbf{T}%
}^{(l)}$ is written as (we suppress $l$, $E_{j}^{\prime}$ and $\lambda_{l+1}%
$)
\[
\overleftrightarrow{\mathbf{T}}=\left(
\begin{tabular}
[c]{ll}%
$p_{1\text{eq}}$ & $p_{2\text{eq}}$\\
$p_{1\text{eq}}$ & $p_{2\text{eq}}$%
\end{tabular}
\right)  .
\]
We consider an arbitrary probability vector $\mathbf{p}=\left(  p_{1}%
,p_{2}\right)  $ to find that%
\[
\left(  p_{1},p_{2}\right)  \left(
\begin{tabular}
[c]{ll}%
$p_{1\text{eq}}$ & $p_{2\text{eq}}$\\
$p_{1\text{eq}}$ & $p_{2\text{eq}}$%
\end{tabular}
\right)  =\left(  p_{1\text{eq}},p_{2\text{eq}}\right)  ,
\]
where we have used the fact that $p_{1}+p_{2}=1$. This thus justifies using
the the microstate $j$ as an EQ-microstate in the notation
$\overleftrightarrow{T}_{i,j}^{(l)}$ as noted above.

\section{Order for Microwork and Microheat\label{Sec-Work-Heat-Order}}

We have seen in the previous section that the microstate $k_{l}^{\prime}$ at
$t_{l+1}^{\prime}$ is a NEQ-microstate. However, the interaction with
$\Sigma_{\text{h}}$ during $\delta_{l}^{\prime\prime}$ ensures that the
probability at $t_{l+1}$ is the equilibrium probability in accordance with Eq.
(\ref{TransitionMatrix-Unique-Effect}). Thus, we conclude that

\begin{conclusion}
\label{Conclusion-Microstate-Nature} The microstates $k_{l}(t_{l})$ and
$k_{l+1}(t_{l+1})$ are EQ-microstates, but the intermediate microstate
$k_{l}^{\prime}(t_{l+1}^{\prime})$ is not.
\end{conclusion}

The order of the interactions with $\Sigma_{\text{w}}$ and $\Sigma_{\text{h}}$
during $\delta_{l}$ is \ important to ensure that $k_{l}(t_{l})$ and
$k_{l+1}(t_{l+1})$ remain EQ-microstates. Suppose we interchange their order
so that the interaction with $\Sigma_{\text{h}}$ occurs during $\delta
_{l}^{\prime}$, followed by the interaction with $\Sigma_{\text{w}}$ during
$\delta_{l}^{\prime\prime}$ with the final effect $\lambda_{l}(t_{l}%
)\rightarrow\lambda_{l+1}(t_{l+1}),k_{l}^{\prime}(t_{l})\rightarrow
k_{l+1}^{\prime}(t_{l+1})$ at the end of $\delta_{l}$; here, we have used the
fact to be established below that $k_{l}^{\prime}(t_{l})$ and $k_{l+1}%
^{\prime}(t_{l+1})$ are NEQ-microstates so we will denote their energies with
a prime in the following. As the change $\lambda_{l}\rightarrow\lambda_{l+1}$
is a consequence of the $\Sigma_{\text{w}}$-interaction, the $\Sigma
_{\text{h}}$-interaction resulting in $k_{l}^{\prime}(t_{l})\rightarrow
k_{l+1}(t_{l+1}^{\prime})$ (as shown below, $k_{l+1}(t_{l+1}^{\prime})$ is an
EQ-microstate after this interaction so its energy is denoted without a prime)
must occur at fixed $\lambda_{l}$. From the discussion above, we conclude that
for the transition $i^{\prime}=k_{l}^{\prime}(t_{l})\rightarrow j=k_{l+1}%
(t_{l+1}^{\prime})$, the matrix element $\overleftrightarrow{T}_{ij}%
^{(l)}(\delta_{l}^{\prime})$ (we use $i$ instead of $i^{\prime}$ in the
subscript) is given by
\begin{equation}
\overleftrightarrow{T}_{ij}^{(l)}(\delta_{l}^{\prime})=p_{j\text{eq }}%
(E_{j}(\lambda_{l}),\lambda_{l}),\forall i^{\prime}%
,l,\label{TransitionMatrix-Interchange}%
\end{equation}
which is different from the matrix elements in Eq.
(\ref{TransitionMatrix-Unique}); here $E_{j}(\lambda_{l})$ is the energy of
$j(t_{l+1}^{\prime})$, which is the same as the energy $E_{j}^{\prime}%
(\lambda_{l})\ $at $t_{l}$. As a consequence, Eq.
(\ref{TransitionMatrix-Unique-Effect}) is replaced by%
\begin{equation}
\mathbf{p}(\mathbf{E}(\lambda_{l}),\lambda_{l},t_{l+1}^{\prime})=\mathbf{p}%
(\mathbf{E}^{\prime}(\lambda_{l}),\lambda_{l},t_{l})\overleftrightarrow
{\mathbf{T}}^{(l)}=\mathbf{p}_{\text{eq}}(\mathbf{E}(\lambda_{l}),\lambda
_{l}).\label{TransitionMatrix-Unique-Interchange}%
\end{equation}
This means that the probability of the arriving microstate $j$ with
$\lambda(t_{l+1}^{\prime})=\lambda_{l}$\ is the equilibrium probability of $j$
at $t_{l+1}^{\prime}$, even though $i^{\prime}$ at the start of $\delta_{l}$
is a NEQ-microstates as we now demonstrate. 

The $\Sigma_{\text{w}}$-interaction with $k_{l+1}$ at $t_{l+1}^{\prime}$
results in $\lambda_{l}\rightarrow\lambda_{l+1}$ and changes its energy to
$E_{k_{l+1}}^{\prime}(\lambda_{l+1})$ from $E_{k_{l+1}}(\lambda_{l})$ due to
the microwork $\Delta R_{\text{C}}^{(l)}(k_{l+1},\lambda_{l}\rightarrow
\lambda_{l+1})$ without changing its probability so that $p_{k_{l+1}%
}(E_{k_{l+1}}^{\prime}(\lambda_{l+1}),\lambda_{l+1},t_{l+1})=p_{k_{l+1}%
\text{eq}}(E_{k_{l}}(\lambda_{l}),\lambda_{l})\neq p_{k_{l+1}\text{eq}%
}(E_{k_{l+1}}^{\prime}(\lambda_{l+1}),\lambda_{l+1})$. This explains why the
resulting NEQ-microstate at $t_{l+1}$ is denoted by $k_{l+1}^{\prime}$. Using
the same argument by replacing $l$ by $l-1$,\ we will conclude that the final
microstate $k_{l}^{\prime}$ at the end of $\delta_{l-1}$\ is also a
NEQ-microstate, which finally proves our assertion that both $k_{l}^{\prime}$
and $k_{l+1}^{\prime}$ are NEQ-microstates, with the intermediate microstate
$k_{l+1}$ an EQ-microstate.

It should be evident that in order to obtain an EQ terminal microstate $k_{n}%
$, we must always have the $\Sigma_{\text{w}}$-interaction followed by the
$\Sigma_{\text{h}}$-interaction during $\delta_{n-1}$. Similarly, if we wish
to start from an EQ-microstate $k_{0}$, we must use the same order of the two
interactions in $\delta_{0}$. Recursively, this is true for all other
intervals, which means that the order of the two interactions in each interval
$\delta_{l}$ \emph{must not be reversed}. Thus, we come to the following

\begin{conclusion}
\label{Conclusion-InteractionOrder}In each interval $\delta_{l}$, we must have
the $\Sigma_{\text{w}}$-interaction followed by the $\Sigma_{\text{h}}$-interaction.
\end{conclusion}

\section{Consequences of Detailed Balance\label{Sec-Consequences}}

We now follow the consequence of Eq. (\ref{TransitionMatrix-Unique-Effect}).
We start with $l=0$ and focus on the initial EQ-microstate $k_{0}$ during the
interval $\delta_{0}$\ of $\overline{\gamma}^{\text{(F)}}$; see Eq.
(\ref{Forward-trajectory}). It is turned into a NEQ-microstate $k_{0}^{\prime
}$ at $t_{1}^{\prime}$ after microwork has been performed during $\delta
_{0}^{\prime}$ so that $E_{k_{0}}^{\prime}(\lambda_{1})\doteq E(k_{0}%
,\lambda_{0})+\Delta R_{\text{C}}^{(0)}(k_{0},\lambda_{0}\rightarrow
\lambda_{1})$ at $t_{1}^{\prime}$, which we will simply denote by $E_{k_{0}%
}^{\prime}(\lambda_{1})$ so that we can use the notation $\mathbf{E}^{\prime
}(\lambda_{1})$ for the set $\{E_{k_{0}^{\prime}}^{\prime}(\lambda_{1})\}$;
recall that we suppress the prime on the subscript $k_{0}^{\prime}$ in
$E_{k_{0}}^{\prime}(\lambda_{1})$. The probability of the initial microstate
$k_{0}$ is $p_{k_{0}\text{eq}}(E_{k_{0}},\lambda_{0})$. At $t_{1}^{\prime}$,
$k_{0}^{\prime}$ has the same probability as this initial probability:
$p_{k_{0}^{\prime}}(E_{k_{0}^{\prime}},\lambda_{1})=p_{k_{0}\text{eq}%
}(E_{k_{0}},\lambda_{0})$. Interaction with $\Sigma_{\text{h}}$\ does not
change the energies of the microstates but only their probabilities.
Therefore, $E(k_{0},\lambda_{1})=E(k_{0}^{\prime},\lambda_{1})\neq
E(k_{0},\lambda_{0})$ but $E(\overline{k}_{0},\lambda_{1})\equiv
E(\overline{k}_{0},\lambda_{0}),\forall\overline{k}_{0}\neq$ $k_{0}$ at time
$t_{1}$; in particular, $E(k_{1},\lambda_{1})\equiv E(k_{1},\lambda_{0})$.
This is because  as no microwork is done on these microstates $\neq k_{0}$.
Thus, $\mathbf{E}^{\prime}(\lambda_{1})=\mathbf{E}(\lambda_{1})=\mathbf{E}%
(\lambda_{0})$, except for the initial microstate $k_{0}.$ As we have seen in
the previous section, $\overleftrightarrow{\mathbf{T}}^{(0)}$ is determined by
$\mathbf{p}_{\text{eq}}(\mathbf{E}^{\prime},\lambda_{1})$ and for which
$\mathbf{p}_{\text{eq}}(\mathbf{E},\lambda_{0})$ corresponding to the initial
macrostate is not an invariant probability vector. Therefore, $\mathbf{p}%
_{\text{eq}}(\mathbf{E},\lambda_{0})$ represents an arbitrary probability
vector in Eq. (\ref{TransitionMatrix-ArbitraryVector-Effect}). It now follows
from Eq. (\ref{TransitionMatrix-Unique-Effect}) that%
\[
\mathbf{p}(\mathbf{E}^{\prime},\lambda_{1},t_{1}^{\prime})\overleftrightarrow
{\mathbf{T}}^{(0)}=\mathbf{p}_{\text{eq}}(\mathbf{E},\lambda_{1}).
\]
We also know that $\mathbf{p}(\mathbf{E}^{\prime},\lambda_{1},t_{1}^{\prime
})\equiv\mathbf{p}_{\text{eq}}(\mathbf{E},\lambda_{0})$, the initial
probability vector of the EQ-macrostate \textsf{A. }Therefore, we can
equivalently say that the effect of $\overleftrightarrow{\mathbf{T}}^{(0)}$ on
$\mathbf{p}_{\text{eq}}(\mathbf{E},\lambda_{0})$ results in $\mathbf{p}%
_{\text{eq}}(\mathbf{E}^{\prime},\lambda_{1})$, which is the equilibrium
probability vector for the arriving macrostate at the end $t_{1}=\delta t$ of
$\delta_{0}$ or the start of $\delta_{1}$. This is the macrostate that is
arrived at after the successive interactions, first with $\Sigma_{\text{w}}$
and then with $\Sigma_{\text{h}}$. It is obviously an EQ-macrostate as its
probability vector is $\mathbf{p}_{\text{eq}}(\mathbf{E},\lambda_{1})$. It
arises when we consider all the trajectories starting with any of the $r$
possible microstates $\{$\textsf{a}$_{k}\}$ for $k_{0}$. In other words, all
possible microstates at $t_{1}$ represent EQ-microstates, \textit{i.e.}, an
EQ-macrostate. However, we should not forget that $\overleftrightarrow
{\mathbf{T}}^{(0)}$ only acts during $\delta_{0}^{\prime}$ and not over the
entire duration $\delta_{0}$. This also means that $k_{0}$ does not have to an
EQ-microstate or that \textsf{A} does not have to be an EQ-macrostate for the
macrostate at $t=t_{1}$ to be an EQ-macrostate.

It now follows that applying Eq. (\ref{TransitionMatrix-Unique-Effect}) along
with the above argument for $l=1,2,\cdots,n-1$ successively, we see that
\begin{equation}
\mathbf{p}(\mathbf{E}_{l+1},\lambda_{l+1},t_{l+1})=\mathbf{p}_{\text{eq}%
}(\mathbf{E}_{l+1},\lambda_{l+1}),l=0,1,2,\cdots,n-1,\label{Probability-Heat}%
\end{equation}
where $\mathbf{E}_{l+1}$ really stands for \ $\mathbf{E}(\lambda_{l+1}%
)\equiv\mathbf{E}(\lambda_{l+1})$ at the end of the interval $\delta_{l}$. We
are now set for drawing one of our most important conclusions based on the
principle of detailed balance or the use of the particular transition matrix
$\overleftrightarrow{\mathbf{T}}^{(l)}$:

\begin{conclusion}
\label{Conclusion-EqStates}Due to the requirement of the principle of detailed
balance and having the $\Sigma_{\text{h}}$-interaction always occur in the
second-half of each interval $\delta_{l}$, the successive probability vectors
at times $t_{m},m=1,\cdots,n$ are $\mathbf{p}_{\text{eq}}(\mathbf{E}%
(\lambda_{1}),\lambda_{1}),\mathbf{p}_{\text{eq}}(\mathbf{E}(\lambda
_{2}),\lambda_{2}),\cdots,\mathbf{p}_{\text{eq}}(\mathbf{E}(\lambda
_{n}),\lambda_{n})$, respectively, for the process $\mathcal{P}^{\text{(F)}}$;
we do not have to assume that \textsf{A }must be an EQ-macrostate. Thus, all
intermediate macrostates and the final macrostate \textsf{B} must be
EQ-macrostates, contrary to what is usually claimed.
\end{conclusion}

By requiring the initial macrostate \textsf{A }to be also an EQ-macrostate, we
can also add $\mathbf{p}_{\text{eq}}(\mathbf{E}(\lambda_{0}),\lambda_{0})$ in
the above sequence, but this requirement is not essential. However, the most
important conclusion is about the final macrostate \textsf{B}\ being an
EQ-macrostate. As EQ-macrostates imply EQ-microstates, the above Conclusion
means that \emph{not only the last microstate but every intermediate
microstate in the Crooks' approach is an equilibrium microstate}, which
contradicts the common understanding of the Crooks approach in which the final
macrostate does not have to be an EQ-macrostate as is the case with the
Jarzynski process. This also justifies why $\tau$ should really be thought of
as representing $\tau_{\text{eq}}$.

\section{Backward Trajectory and determination of $\omega(\overline{\gamma
}_{\text{\textsf{ab}}}^{\text{(F)}})$\label{Sec-BackwardTrajectory}}

\subsection{Limitations of the Crooks' Approach}

Crooks identifies the forward trajectory by the sequence of microstates
\begin{subequations}
\begin{equation}
\overline{\gamma}_{\text{C}}^{\text{(F)}}:k_{0}\overset{\lambda_{1}%
}{\rightarrow}k_{1}\overset{\lambda_{2}}{\rightarrow}k_{2}\overset{\lambda
_{3}}{\rightarrow}k_{3}\cdots k_{n-1}\overset{\lambda_{n}}{\rightarrow}k_{n},
\label{CrooksForwardTrajectory}%
\end{equation}
and defines the backward trajectory as%
\begin{equation}
\overline{\gamma}_{\text{C}}^{\text{(B)}}:k_{0}\overset{\lambda_{1}%
}{\leftarrow}k_{1}\overset{\lambda_{2}}{\leftarrow}k_{2}\overset{\lambda_{3}%
}{\leftarrow}k_{3}\cdots k_{n-1}\overset{\lambda_{n}}{\leftarrow}k_{n},
\label{CrooksBackwardTrajectory}%
\end{equation}
which seems a natural choice by reversing the transitions; we have suppressed
the suffix \textsf{ab} for simplicity. The backward trajectory for a Markov
chain has been an actively investigated topic \cite{Kelly}. We will simply
quote some relevant results. The transition matrix element $T_{ij}%
^{\text{(B)}}$ for the backward trajectory is given by%
\end{subequations}
\begin{equation}
\overleftrightarrow{T}_{ij}^{\text{(B)}}=\frac{\overleftrightarrow{T}%
_{ji}^{\text{(F)}}p_{j\text{eq}}}{p_{i\text{eq}}}=p_{j\text{eq}}%
=\overleftrightarrow{T}_{ij}^{\text{(F)}}, \label{BackwardTransitionElement}%
\end{equation}
where we have used twice Eq. (\ref{TransitionMatrix-Unique}), which is valid
in the Crooks's approach. Such a Markov chain for which the forward and
backward trajectories (chains) have the same transition probabilities is said
to be \emph{reversible}, and there is no need to use the two superscripts F
and B on $\overleftrightarrow{T}_{ij}$. In this case, the principle of
detailed balance is satisfied. Thus, the Crooks approach results in a
reversible trajectory. According to \emph{Kolmogorov's criterion }\cite[p.
21]{Kelly}, we must have%
\begin{align}
&  \overleftrightarrow{T}_{k_{0}k_{1}}\overleftrightarrow{T}_{k_{1}k_{2}%
}\cdots\overleftrightarrow{T}_{k_{n-1}k_{n}}\overleftrightarrow{T}_{k_{n}%
k_{0}}\nonumber\\
&  =\overleftrightarrow{T}_{k_{0}k_{n}}\overleftrightarrow{T}_{k_{n}k_{n-1}%
}\cdots\overleftrightarrow{T}_{k_{2}k_{1}}\overleftrightarrow{T}_{k_{1}k_{0}}.
\label{KolmogorovCriteria}%
\end{align}
This merely expresses the fact that by adding the initial microstate $k_{0}$
after $k_{n}$ and pictorially thinking of all the microstates to be on a
closed loop, we can traverse the loop from $k_{0}$ in either directions, one
which we identify as the forward and the other backward. Therefore, the
equivalence of the two transition matrices allows us to think of the backward
trajectory as the \emph{continuation }of the forward trajectory with
additional microstates $\left\{  \mathsf{a}_{k}\right\}  $ with the microstate
index $k=n+1,n+2,\cdots,2n$ and use the transformation $k^{\prime}=2n-k$ to
put them on a loop as shown below:%
\[
k_{0}%
\begin{array}
[c]{c}%
_{\overset{\lambda_{1}}{\nearrow}}k_{1}\overset{\lambda_{2}}{\rightarrow}%
k_{2}\overset{\lambda_{3}}{\rightarrow}k_{3}\cdots k_{n-1}{}_{\overset
{\lambda_{n}}{\searrow}}\\
^{\overset{\lambda_{1}}{\nwarrow}}k_{1}\overset{\lambda_{2}}{\leftarrow}%
k_{2}\overset{\lambda_{3}}{\leftarrow}k_{3}\cdots\leftarrow k_{n-1}%
^{\overset{\lambda_{n}}{\swarrow}}%
\end{array}
k_{n},
\]
except for the mismatch for the index for the work parameter $\lambda$. The
transformation $k\rightarrow k^{\prime}$ would have required $\lambda_{n+1}$
for $k_{n}\overset{\lambda_{n+1}}{\rightarrow}k_{n+1}$ to be transformed to
$\lambda_{n-1}$ and to the transition $k_{n-1}\overset{\lambda_{n-1}%
}{\longleftarrow}k_{n}$. But Crooks takes it to be $k_{n-1}\overset
{\lambda_{n}}{\longleftarrow}k_{n}$, see Eq. (\ref{CrooksBackwardTrajectory}).
This casts doubt on the backward trajectory in Eq.
(\ref{CrooksBackwardTrajectory}) not forming a reversible trajectory, which
would then violate Kolmogorov's criterion.

The significance of the transition$\ k_{l}\overset{\lambda_{l+1}}{\rightarrow
}k_{l+1}$ or its reverse$\ k_{l}\overset{\lambda_{l+1}}{\leftarrow}k_{l+1}$ is
that the work parameter $\lambda_{l+1}$ is kept fixed during the transition
$k_{l}\rightarrow k_{l+1}$ or $k_{l}\leftarrow k_{l+1}$. It is with this
choice that the CFT in Eq. (\ref{Crooks-FT0}) has been derived. It should be
evident from the discussion earlier that each transition refers to a
$\Sigma_{\text{h}}$-interaction, whether the transition is forward or
backward. Since we have required the initial microstate $k_{0}~$of
$\overline{\gamma}^{\text{(F)}}$\ to be an EQ-microstate, we need to follow
Conclusion \ref{Conclusion-InteractionOrder} so that each $\Sigma_{\text{h}}%
$-interaction is preceded by a $\Sigma_{\text{w}}$-interaction in each
interval $\delta_{l}$, even for the continuation of the forward trajectory
described above. This is consistent with the first transition $k_{0}%
\overset{\lambda_{1}}{\rightarrow}k_{1}$ in \ Eq.
(\ref{CrooksForwardTrajectory}), where we fix $\lambda=\lambda_{1}$ after
$\lambda_{0}\rightarrow\lambda_{1}$ due to the $\Sigma_{\text{w}}%
$-interaction. This is true of all the transitions in Eq.
(\ref{CrooksForwardTrajectory}). Therefore, according to Conclusion
\ref{Conclusion-EqStates}, the end microstate $k_{n}$ with $\lambda
=\lambda_{n}$ for $\overline{\gamma}^{\text{(F)}}$ is an EQ-microstate.

We are now ready to introduce the backward trajectory $\overline{\gamma
}^{\text{(B)}}$ for which the EQ-microstate $k_{n}$ with $\lambda=\lambda_{n}$
must play the role of the initial microstate. The final microstate of the
backward trajectory must be the initial EQ-microstate $k_{0}$ of
$\overline{\gamma}^{\text{(F)}}$. This allows us to reverse the sequence
$(k_{0},k_{1},k_{2},\cdots,k_{n-1},k_{n})$ for $\overline{\gamma}^{\text{(F)}%
}$ to the sequence $(k_{n},k_{n-1},k_{n-2},\cdots,k_{1},k_{0})$ for
$\overline{\gamma}^{\text{(B)}}$ as used in the definition of $e^{\omega
(\overline{\gamma}_{\text{C,\textsf{ab}}}^{\text{(F)}})}\ $in Eq.
(\ref{Crooks-FT0}). Both trajectories sample the same two EQ-macrostates
\textsf{A }and \textsf{B} in reverse order by going through the same
intermediate microstates.

Let us first consider the backward trajectory $\overline{\gamma}_{\text{C}%
}^{\text{(B)}}$ specified in Eq. (\ref{CrooksBackwardTrajectory}), which does
not mention the $\Sigma_{\text{w}}$-interactions, but which can be deduced
from the above specification of $\overline{\gamma}_{\text{C}}^{\text{(B)}}$.
It is clearly seen from $k_{n-1}\overset{\lambda_{n}}{\leftarrow}k_{n}$ at
fixed $\lambda=\lambda_{n}$ that it represents a $\Sigma_{\text{h}}%
$-interaction, which then must be followed by a $\Sigma_{\text{w}}%
$-interaction in which $\lambda_{n}\rightarrow\lambda_{n-1}$ so that we can
follow the next $\Sigma_{\text{h}}$-interaction specified by $k_{n-2}%
\overset{\lambda_{n-1}}{\leftarrow}k_{n-1}$. We see that the order of the two
interactions are, therefore, reversed for $\overline{\gamma}_{\text{C}%
}^{\text{(B)}}$ compared to that for $\overline{\gamma}_{\text{C}}%
^{\text{(F)}}$. Following this argument recursively, we come to the last
$\Sigma_{\text{h}}$-interaction specified by $k_{0}\overset{\lambda_{1}%
}{\leftarrow}k_{1}$ for $\overline{\gamma}_{\text{C}}^{\text{(B)}}$. This
cannot be the last interaction as it will leave the value of the work
parameter at $\lambda=\lambda_{1}$, while $\lambda=\lambda_{0}$ for the
macrostate \textsf{A}. Therefore, there must be another $\Sigma_{\text{w}}%
$-interaction to ensure that $\lambda_{1}\rightarrow\lambda_{0}$. However,
this makes the last microstate a NEQ-microstate $k_{0}^{\prime}$ whose
probability is not the equilibrium probability $p_{k_{0}\text{eq}}(E_{k_{0}%
},\lambda_{0})$. Thus, the proposed backward trajectory $\overline{\gamma
}_{\text{C}}^{\text{(B)}}$ does not bring back the system to the EQ-macrostate
\textsf{A}.

\subsection{New Approach}

\begin{claim}
\label{Conclusion-A-EQ-Macrostate}In order to define a backward trajectory
properly, it is important to have \textsf{A} an EQ-macrostate.
\end{claim}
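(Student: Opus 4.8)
The plan is to show that any legitimate backward trajectory must terminate in the initial microstate $k_0$ at the work parameter $\lambda_0$, and then to use the detailed-balance structure to force this terminal microstate to carry the equilibrium probability, whence \textsf{A} is an EQ-macrostate.

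First I would invoke the defining requirement of a reverse process: $\mathcal{P}^{\text{(B)}}$ must undo $\mathcal{P}^{\text{(F)}}$, so its initial microstate is $k_n$ at $\lambda=\lambda_n$---already shown to be an EQ-microstate by Conclusion \ref{Conclusion-EqStates}---and its terminal configuration is pinned to $(k_0,\lambda_0)$, recovering macrostate \textsf{A}. Next I would apply Conclusion \ref{Conclusion-InteractionOrder}, which forbids reversing the order of the two interactions within any interval if the terminal microstate of that interval is to be an EQ-microstate. Carried over to $\overline{\gamma}^{\text{(B)}}$, this means the last segment is a $\Sigma_{\text{w}}$-interaction enforcing $\lambda_1\to\lambda_0$, followed by a $\Sigma_{\text{h}}$-interaction at fixed $\lambda_0$ that produces $k_0$. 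Thus $k_0$ is the \emph{arriving} microstate of a $\Sigma_{\text{h}}$-interaction.

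The decisive step is then immediate: by Eq. (\ref{TransitionMatrix-Unique-Effect}) and Conclusion \ref{Conclusion-Microstate-Nature}, the arriving microstate of any $\Sigma_{\text{h}}$-interaction under the balanced matrix $\overleftrightarrow{\mathbf{T}}$ carries the equilibrium probability irrespective of the departing microstate's probability. Hence $k_0$ at $\lambda_0$ must be an EQ-microstate, which is precisely the statement that \textsf{A} is an EQ-macrostate. Conversely, if \textsf{A} were not an EQ-macrostate, its microstate $k_0$ would carry a non-equilibrium probability $p_{k_0}(\lambda_0)\neq p_{k_0\text{eq}}(\lambda_0)$ that the detailed-balance output of the terminal heat interaction can never reproduce; no consistent backward trajectory could then return $\Sigma$ to \textsf{A}. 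This also explains why the \textsf{A}-EQ requirement, declared inessential for the \emph{forward} trajectory (where the first $\Sigma_{\text{h}}$-interaction equilibrates everything downstream), becomes indispensable for the backward one.

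The main obstacle I anticipate is not any single deduction above but verifying that the mandatory $\Sigma_{\text{w}}$-then-$\Sigma_{\text{h}}$ ordering of Conclusion \ref{Conclusion-InteractionOrder} genuinely propagates to the backward direction. Crooks' own reversal in Eq. (\ref{CrooksBackwardTrajectory}) silently flips this order, so its last step is a $\Sigma_{\text{h}}$-interaction at fixed $\lambda_1$ followed by a stray $\Sigma_{\text{w}}$-interaction to reach $\lambda_0$, leaving the NEQ-microstate $k_0'$ rather than $k_0$. The care lies in constructing the corrected backward trajectory---reading the forward loop in reverse while reassigning each fixed work parameter to the transition it governs---so that the terminal interaction is forced to be a $\Sigma_{\text{h}}$-interaction, thereby pinning the EQ character of $k_0$.
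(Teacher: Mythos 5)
Your proposal is correct and follows essentially the same route as the paper: it shows that a properly ordered backward trajectory (with the $\Sigma_{\text{w}}$-then-$\Sigma_{\text{h}}$ sequence preserved, ending in a heat interaction at fixed $\lambda_{0}$) necessarily terminates in an EQ-microstate by Eq. (\ref{TransitionMatrix-Unique-Effect}), so the terminal microstate can coincide with $k_{0}$ only if \textsf{A} is an EQ-macrostate, exactly the reasoning the paper distributes between the critique of Eq. (\ref{CrooksBackwardTrajectory}) and the construction in Eq. (\ref{ReverseTrajectorySequence}). You also correctly isolate the one point the paper itself flags in its Remark, namely that the interaction order of Conclusion \ref{Conclusion-InteractionOrder} must be carried over to the backward direction.
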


Therefore, we will assume as is commonly done that \textsf{A }is an EQ-macrostate.

We have uniquely identified the forward trajectory in Eq.
(\ref{Forward-trajectory}) involving the intermediate NEQ-microstates
$\{k_{k}^{\prime}\}$, and can symbolically represent as an interaction
sequence (we suppress w and h above the arrows as they can be uniquely
deciphered from the sequence)
\begin{subequations}
\begin{equation}
\overline{\gamma}^{\text{(F)}}:k_{0}\rightarrow k_{0}^{\prime}\rightarrow
k_{1}\rightarrow k_{1}^{\prime}\cdots k_{n-1}\rightarrow k_{n-1}^{\prime
}\rightarrow k_{n}.\label{ForwardTrajectorySequence}%
\end{equation}
This trajectory starts at the EQ-microstate $k_{0}$\ and terminates in the
EQ-microstate $k_{n}$. We can now uniquely identify the backward trajectory
$\overline{\gamma}^{\text{(B)}}$ that starts in the EQ-microstate $k_{n}$ and
terminates in the EQ-microstate $k_{0}$ by the pictorial continuation
described above as follows:%
\begin{equation}
\overline{\gamma}^{\text{(B)}}:k_{0}\leftarrow k_{1}^{\prime}\leftarrow
k_{1}\leftarrow k_{2}^{\prime}\leftarrow\cdots k_{n-1}\leftarrow k_{n}%
^{\prime}\leftarrow k_{n}.\label{ReverseTrajectorySequence}%
\end{equation}
By suppressing all of the $\Sigma_{\text{w}}$-interactions, we can rewrite the
above sequences as%
\end{subequations}
\begin{equation}%
\begin{array}
[c]{c}%
\overline{\gamma}^{\text{(F)}}:k_{0}^{\prime}\overset{\lambda_{1}}%
{\rightarrow}k_{1},k_{1}^{\prime}\overset{\lambda_{2}}{\rightarrow}%
k_{2},\cdots,k_{n-2}^{\prime}\overset{\lambda_{n-1}}{\rightarrow}%
k_{n-1},k_{n-1}^{\prime}\overset{\lambda_{n}}{\rightarrow}k_{n},\\
\overline{\gamma}^{\text{(B)}}:k_{0}\overset{\lambda_{0}}{\leftarrow}%
k_{1}^{\prime},k_{1}\overset{\lambda_{1}}{\leftarrow}k_{2}^{\prime}%
,\cdots\overset{\lambda_{n-2}}{,k_{n-2}\leftarrow}k_{n-1}^{\prime}%
,k_{n-1}\overset{\lambda_{n-1}}{\leftarrow}k_{n}^{\prime}.
\end{array}
\label{TrajectorySequence-Condensed}%
\end{equation}
If we compare the above sequences with the sequences in Eqs.
(\ref{CrooksForwardTrajectory}) and (\ref{CrooksBackwardTrajectory}), we
observe that while the sequence of the work parameters are the same for the
forward trajectories, they are displaced by one unit for the backward
trajectories. However, $\overline{\gamma}^{\text{(B)}}$ brings the system back
to the EQ-microstate $k_{0}$ with $\lambda=\lambda_{0}$\ as required.

\begin{remark}
We see that the order of the two interactions in Conclusion
\ref{Conclusion-InteractionOrder} must also be maintained in the backward
trajectories to ensure that the backward trajectories bring the system back to
the initial EQ-microstate $k_{0}$.
\end{remark}

This also means that the backward trajectories bring the system back to the
EQ-macrostate \textsf{A}.

We are now prepared to evaluate $\omega(\overline{\gamma}^{\text{(F)}})$ that
is defined similar to the definition of $\omega_{\text{C}}(\overline{\gamma
}_{\text{C}}^{\text{(F)}})$ in Eq. (\ref{Crooks-FT0}):%
\[
e^{\omega(\overline{\gamma}^{\text{(F)}})}\doteq\frac{p(\overline{\gamma
}_{\text{\textsf{ab}}}^{\text{(F)}})}{p(\overline{\gamma}_{\text{\textsf{ba}}%
}^{\text{(B)}})}%
\]
The probability $p(\overline{\gamma}^{\text{(F)}})$ is%
\[
p_{k_{0\text{eq}}}(E_{0},\lambda_{0})p_{k_{1\text{eq}}}(E_{1},\lambda
_{1})p_{k_{2\text{eq}}}(E_{2},\lambda_{2})\cdots p_{k_{n\text{eq}}}%
(E_{n},\lambda_{n}),
\]
where we have used the transition matrix elements given in Eq.
(\ref{TransitionMatrix-Unique}). Comparing this with the left side of Eq.
(\ref{KolmogorovCriteria}) and recalling that $\overleftrightarrow{T}%
_{k_{n}k_{0}}=p_{k_{0\text{eq}}}(E_{0},\lambda_{0})$, we see that they are
identical. We similarly find for $p(\overline{\gamma}^{\text{(B)}})\ $the
value
\[
p_{k_{n\text{eq}}}(E_{n},\lambda_{n})p_{k_{n-1\text{eq}}}(E_{n-1}%
,\lambda_{n-1})\cdots p_{k_{0\text{eq}}}(E_{0},\lambda_{0}),
\]
which is the same as for $p(\overline{\gamma}^{\text{(F)}})$ and the right
side of Eq. (\ref{KolmogorovCriteria}). Therefore, we finally obtain a new FT
\begin{equation}
e^{\omega(\overline{\gamma}^{\text{(F)}})}\equiv1,\forall\overline{\gamma
}^{\text{(F)}}\in\boldsymbol{\overline{\gamma}}_{\text{\textsf{ab}}%
}^{\text{(F)}}, \label{FT-Corrected}%
\end{equation}
which is different from the CFT derived by Crooks. Our derivation also shows
that%
\begin{equation}
\beta_{0}\Delta Q_{\text{C}}(\overline{\gamma}^{\text{(F)}})\equiv\Delta
S(\overline{\gamma}^{\text{(F)}}),\forall\overline{\gamma}^{\text{(F)}}.
\label{CrooksHeatFunction}%
\end{equation}
As $\Delta S(\overline{\gamma}^{\text{(F)}})$, see Eq.
(\ref{Trajectory entropy diff}), is different for different choices of
$k_{0},k_{n}\in(1,2,\cdots,r)$ but the same for all forward trajectories
$\overline{\gamma}^{\text{(F)}}\in$$\boldsymbol{\overline{\gamma}}%
$$_{\text{\textsf{ab}}}^{\text{(F)}}$ between $k_{0}$ and $k_{n}$, it is not a
function of the trajectories but only of $k_{0}$ and $k_{n}$. Thus, it follows
from Eq. (\ref{CrooksHeatFunction}) that $\Delta Q_{\text{C}}(\overline
{\gamma}^{\text{(F)}})$ is also a function only of $k_{0}$ and $k_{n}$ but not
of various $\overline{\gamma}^{\text{(F)}}\in$$\boldsymbol{\overline{\gamma}}%
$$_{\text{\textsf{ab}}}^{\text{(F)}}$. As $\omega(\overline{\gamma
}^{\text{(F)}})\equiv0$ is supposed to denote the microentropy $\Delta
S_{0}(\overline{\gamma}^{\text{(F)}})\equiv\Delta_{\text{i}}S(\overline
{\gamma}^{\text{(F)}})$ of $\Sigma_{0}$\ for \emph{all} possible forward
trajectories, we have
\[
\omega(\overline{\gamma}^{\text{(F)}})\equiv\Delta_{\text{i}}S(\overline
{\gamma}^{\text{(F)}})\equiv0,\forall\overline{\gamma}^{\text{(F)}}%
\in\boldsymbol{\overline{\gamma}}_{\text{\textsf{ab}}}^{\text{(F)}},
\]
where $\Delta_{\text{i}}S(\overline{\gamma}^{\text{(F)}})$ is the internally
generated microentropy in the system over $\overline{\gamma}^{\text{(F)}}$. We
thus conclude that

\begin{enumerate}
\item the thermodynamic average $\Delta_{\text{i}}S^{\text{(F)}}$ of
$\Delta_{\text{i}}S(\overline{\gamma}^{\text{(F)}})$\ also vanishes for the
entire process $\mathcal{P}^{\text{(F)}}$;

\item there are no fluctuations in $\Delta_{\text{i}}S(\overline{\gamma
}^{\text{(F)}})$ over all possible forward trajectories as $\Delta_{\text{i}%
}S(\overline{\gamma}^{\text{(F)}})\equiv0,\forall\overline{\gamma}%
^{\text{(F)}}$;

\item it is easy to establish that the same two conclusions are also valid for
all backward trajectories in $\boldsymbol{\overline{\gamma}}$%
$_{\text{\textsf{BA}}}^{\text{(B)}}$.
\end{enumerate}

It is remarkable that $\Delta_{\text{i}}S(\overline{\gamma}^{\text{(F,B)}})$
is neither positive nor negative but identically zero. The most certain
consequence of $\Delta_{\text{i}}S(\overline{\gamma}^{\text{(F,B)}})\equiv0$
is that there is no irreversible entropy generated during $\mathcal{P}$:
$\Delta_{\text{i}}S^{\text{(F,B)}}\equiv0$. Hence, the CFT only covers
reversible processes; it cannot cover irreversible processes. However,
$\Delta_{\text{i}}S^{\text{(F,B)}}\equiv0$ is also consistent with some
$\Delta_{\text{i}}S(\overline{\gamma}^{\text{(F,B)}})$ being positive and
negative in such a way that the averages vanish. This does not happen in the
new FT in Eq. (\ref{FT-Corrected}) due to the absence of fluctuations in
$\Delta_{\text{i}}S(\overline{\gamma}^{\text{(F,B)}})$.

\section{Discussion and Conclusions\label{Sec-Conclusions}}

The derivation of the CFT is based on two assumptions, the first of which has
never been mentioned to date:

\begin{enumerate}
\item[\textbf{A}1.] the energy change $\Delta E($\textsf{a}$,\lambda)\doteq
E($\textsf{a}$,\lambda+\Delta\lambda)-E($\textsf{a}$,\lambda)$ of a microstate
\textsf{a} due to the change $\Delta\lambda$ in the work parameter $\lambda$
is equal to the exchange work $\Delta R_{\text{C}}($\textsf{a}$,\lambda
\rightarrow\lambda+\Delta\lambda)$, see Remark \ref{Remark-DR-DH-relation} and
Eq. (\ref{Crooks-Work}), and not the generalized work $[-\Delta W($%
\textsf{a}$,\lambda\rightarrow\lambda+\Delta\lambda)]$, see Eq.
(\ref{Gujrati-Work});

\item[\textbf{A}2.] the transition between microstates $k$ and $j$ forms a
Markov chain described by a balanced transition matrix, which has the form
given in Eq. (\ref{BalancedT-form}).
\end{enumerate}

The two assumptions seem to be independent. But this is not the case under
closer scrutiny as we show now. It follows from Assumption \textbf{A}1 that
Crooks does not allow for the possibility of any force imbalance, which is
equivalent to assuming that the internal microwork $\Delta_{\text{i}%
}W_{\mathsf{a}}\doteq\Delta_{\text{i}}W($\textsf{a}$,\lambda\rightarrow
\lambda+\Delta\lambda)=0$. It follows from Eq.
(\ref{micro-internal-heat-workEquality}), which is the microscopic version of
Eq. (\ref{diQ-diW-equality}) that $\Delta_{\text{i}}Q_{\mathsf{a}}=0$ so that
$\Delta Q_{\mathsf{a}}=\Delta_{\text{e}}Q_{\mathsf{a}}$. In other words, there
is no irreversible heat exchange. This is true of the two interactions within
each interval $\delta_{l}$. To ensure $\Delta_{\text{i}}Q_{\mathsf{a}}=0$, we
need to impose the requirement of a balanced transition matrix in each
$\delta_{l}$. If, on the other hand, we allow force imbalance so that
$\Delta_{\text{i}}W_{\mathsf{a}}\neq0$ by relaxing \textbf{A}1, then we cannot
continue to use balanced transition matrix, since that would ensure
$\Delta_{\text{i}}Q_{\mathsf{a}}=0$, which would then violate the
thermodynamic requirement $\Delta_{\text{i}}W_{\mathsf{a}}=\Delta_{\text{i}%
}Q_{\mathsf{a}}$; see Eq. (\ref{micro-internal-heat-workEquality}). Thus,
\textbf{A}2 must also be relaxed. This thus proves our claim.

Note that the above conclusion about the absence of irreversibility refers to
the forward trajectories and, by extension, to backward trajectories
independently and has nothing to do with the Kolmogorov criterion in Eq.
(\ref{KolmogorovCriteria}) or the derivation of the CFT, although they are
both consistent with the lack of irreversibility. In other words, the approach
taken by Crooks cannot capture any irreversibility so it is limited to
reversible processes only.

In summary, we have shown that the use of the principle of detailed balance
results in the microstates $k_{1},k_{2},\cdots,k_{n}$ being EQ-microstates,
even if the initial microstate $k_{0}$ is not an EQ-microstate. However, per
convention, we ensure that even $k_{0}$ is an EQ-microstate. This consequence
of the principle of detailed balance by itself does not mean that the process
during each interval $\delta_{l}$\ is reversible. We can use the above
discussion regarding \textbf{A}1 and \textbf{A}2 to show
\[
\Delta_{\text{i}}W_{\mathsf{a}}\equiv\Delta_{\text{i}}Q_{\mathsf{a}}\equiv0.
\]
Instead, we have followed Crooks and look at the backward trajectories. We
find that the definition of the backward trajectory $\overline{\gamma
}_{\text{C}}^{\text{(B)}}$ fails to reproduce the EQ-microstate $k_{0}$. We
provide a modified version of the backward trajectory $\overline{\gamma
}^{\text{(B)}}$ that ensures to reproduce the EQ-microstate $k_{0}$. We then
find that the ratio of the probabilities of the forward and backward
trajectories turns out to be unity, in accordance with the Kolmogorov
criterion for a reversible Markov chain. From this, we finally conclude that
the adimensional Crooks microheat function $\beta_{0}\Delta Q_{\text{C}%
}(\overline{\gamma}^{\text{(F)}})\equiv\Delta_{0}S(\overline{\gamma
}^{\text{(F)}})$ so that the average irreversible entropy generation
$\Delta_{\text{i}}S^{\text{(F)}}$ vanishes precisely so that the CFT\ is a
result valid only for a reversible process and not for an irreversible
process. This result is consistent with our previous result that the JE is
also valid only for a reversible process.

Let us try to understand the reason for the above limitation of the modified
CFT. There are two steps that are responsible for the new result.

\begin{enumerate}
\item By not allowing any force imbalance in the work protocol (see
\textbf{A}1) , allows Crooks to accept Eq. (\ref{MI-Work-HamiltonianRelation})
to identify $\Delta_{\text{e}}R_{\text{C}}(\overline{\gamma}%
_{\text{C,\textsf{ab}}})$. This, according to Remark
\ref{Remark-DR-DH-relation}, implies that there is no internal microwork, a
necessity for irreversibility. From the identity in Eq.
(\ref{micro-internal-heat-workEquality}), this also means that there is no
internal microheat generated within the system. Thus, we must allow for force
imbalance in the work protocol. The force imbalance can be identified as
giving rise to an internal variable \cite{Gujrati-I}.

\item By treating the transition matrix as balanced or accepting the principle
of the detailed balance, the temperature of the system is always taken to be
$T_{0}$, the temperature of the medium. This means that the exchanged heat is
reversible. As there is no internal microheat also, there is again no
irreversibility. Therefore, one must abandon balanced transition matrices to
allow for possible heat exchange at different temperatures and make the
process irreversible. The temperature imbalance can be identified as giving
rise to an internal variable \cite{Gujrati-I}.

\item The rate of exchange of heat depends very strongly on the physical
properties such as heat conductivity, etc. of $\Sigma$ so whether the heat
transfer is isothermal or not strongly depends on how large or small is the
duration $\delta^{\prime\prime}$; we are suppressing the index $l$. This means
that it is determined by the dependence of $T_{ij}(\delta^{\prime\prime})$ on
$\delta^{\prime\prime}$. We see from Eq. (\ref{MarkovProperty2}) that%
\[
\frac{d\mathbf{p}(t^{\prime\prime})}{d\delta^{\prime\prime}}=\frac
{d\mathbf{p}(t^{\prime\prime})}{dt^{\prime\prime}}=\mathbf{p}(t^{\prime}%
)\frac{d\mathbf{T}(\delta^{\prime\prime})}{d\delta^{\prime\prime}}.
\]
Since $\mathbf{T=}\overleftrightarrow{\mathbf{T}}$ has no dependence on
$\delta^{\prime\prime}$ as shown in Claim \ref{Claim-Time-independence of T},
$d\mathbf{p}(t^{\prime\prime})/dt^{\prime\prime}$ also vanishes, which is
consistent with the conclusion in Eq. (\ref{BalancedT-form}), regardless of
$\mathbf{p}(t^{\prime})$. Therefore, the acceptance of the principle of
detailed balance does not allow for the rate of heat transfer to depend on
$\delta^{\prime\prime}$, which explains why there is no irreversibility due to
heat transfer. To describe irreversible heat transfer, we must abandon the
principle of detailed balance.
\end{enumerate}

It would be interesting to follow the consequences of abandoning the principle
of detailed balance, the lack of fluctuations of $\Delta_{\text{i}}%
S(\overline{\gamma}^{\text{(F,B)}})$, and how $\Delta Q_{\text{C}%
}^{\text{(F,B)}}(\delta_{l}^{\prime\prime})$ relates to $\Delta Q_{k}%
(\delta_{l}^{\prime\prime})$ in Eq. (\ref{Micro-heat-work}). We hope to return
to these issues in a separate publication.

Valuable communications with G. Crooks are gratefully acknowledged.

\end{document}